\newcommand{\remove}[1] {}
\newcommand{\eps}{\varepsilon}
\renewcommand{\epsilon}{\varepsilon}
\newcommand{\etal}{{\rm et~al.}}
\newlength{\pgmtab}
\newcommand{\BCP}{{\sc BCSP\-(price)}}
\newcommand{\BCB}{{\sc BCSP\-(bid)}}
\newcommand{\BCBO}{{\sc BC\-SP\-(best\- offer)}}
\newcommand{\BOSP}{{\sc BOSP}}
\newcommand{\CTR}{{\sc CTR}}
\newcommand{\GSP}{{\sc GSP}}
\newcommand\blfootnote[1]{%
  \begingroup
  \renewcommand\thefootnote{}\footnote{#1}%
  \addtocounter{footnote}{-1}%
  \endgroup
}
\begin{document}

\title{On the Stability of Generalized Second Price Auctions with Budgets}

\author{Josep D{\'i}az\inst{1} \and Ioannis Giotis\inst{1,2} \and Lefteris Kirousis\inst{3} \and Evangelos Markakis\inst{2} \and Maria Serna\inst{1}}

\institute{Departament de Llenguatges i Sistemes Informatics\\ Universitat Politecnica de Catalunya, Barcelona
\and 
Department of Informatics \\ Athens University of Economics and Business, Greece
\and 
Department of Mathematics,  National \& Kapodistrian University of Athens, Greece 
\\ and Computer Technology Institute \& Press  ``Diophantus''}

\maketitle
%%%%%%%%%%%%%%%%%%%%%%%%%%%%%%%%%%%%%%%%%%%%%%%%%%%%%%%%%%%%%%%%%%%%%%%%%%%%%%%%%%%%%%%%
\begin{abstract} 
The Generalized Second Price (GSP) auction used typically to mo\-del sponsored search auctions does not include the notion of budget constraints, which is present in practice. Motivated by this, we introduce the different variants of GSP auctions that take budgets into account in natural ways. We examine their stability by focusing on the existence of Nash equilibria and envy-free assignments. We highlight the differences between these mechanisms and find that only some of them exhibit both notions of stability. This shows the importance of carefully picking the right mechanism to ensure stable outcomes in the presence of budgets.\protect\blfootnote{\scriptsize Contact: \email{\{diaz,igiotis,mjserna\}@lsi.upc.edu,lkirousis@math.uoa.gr,markakis@gmail.com}}\blfootnote{\scriptsize Josep D{\'i}az, Maria J. Serna and Ioannis Giotis supported by the CICYT project TIN-2007-66523 (FORMALISM). This research has also been co-financed by the European Union (European Social Fund – ESF) and Greek national funds through the Operational Program \lq\lq{}Education and Lifelong Learning\rq\rq{} of the National Strategic Reference Framework (NSRF) - Research Funding Program: Thales. Investing in knowledge society through the European Social Fund.}
\end{abstract}

%\category{H.4.0}{Information Systems Applications}{General}
%\category{G.0}{Mathematics of Computing}{General}

%\terms{Algorithms, economics, theory}

%\keywords{Auctions, Sponsored Search, keyword auctions}

%\pagebreak
\pagestyle{plain}

%%%%%%%%%%%%%%%%%%%%%%%%%%%%%%%%%%%%%%%%%%
\section{Introduction}
%%%%%%%%%%%%%%%%%%%%%%%%%%%%%%%%%%%%%%%%%%%%%%%%%%%%%%%%%%%%%%%%%%%%%%%%%%%%%%%%%%%%%%%%

Advertising on Internet search engines has evolved into a phenomenal driving force both for the search engines and the advertising businesses. It is a modern and rapidly growing method that is now being implemented in various other popular sites beyond search engines, such as blogs, and social networking sites. Although some rightful concern has been raised regarding privacy issues and distinguishability from non-sponsored results, there are clear advantages to the advertisers who can efficiently reach their target audiences and observe the results of their ad campaign within days or even hours. At the same time, online ads account for a large share of the profits for search engines and other participating web-sites. Even the web-user experience can be enhanced, by the delivery of additional information relevant to their queries.

In a typical instance, a user queries a search engine for a particular keyword of commercial interest, and the search engine determines the ads to be displayed by means of an auction. The prevailing system uses a pay-per-click policy, i.e., it only charges an advertiser when the user clicks on the corresponding link and is diverted to the advertiser' s web-site. 

%New paragraph for LATIN
The mechanism used can be viewed as an auction for multiple homogenous indivisible items, the advertisement areas available, with single-demand buyers, since it is not desirable for the same advertisement to appear more than once. Such auctions can find applications in a wide variety of scenarios besides Internet advertising where buyers have a valuation per unit of a particular good but the setting is restricted to selling only single fixed-sized bundles. For example, consider selling different fixed-sized shipments of a food product when the seller cannot send more than one shipment to the same destination or frequency spectrum auctions of different fixed-sized bandwidths where regulations do not permit buying more than one continuous bandwidth. Noting that the mechanisms can be applied in much different scenarios, we will work in the context of Internet search advertising both because of its wide-spread application today and also because of the significant focus it receives in related literature.

In the early history of sponsored search auctions, the allocation of slots to advertisers was determined by a \emph{first-price auction}, as in the systems originally used by Overture. Later on, Google was the first to switch to a \emph{second-price auction}, an approach which demonstrated superior characteristics and was quickly adopted by the rest of the major search engines. The main idea is that the advertisers declare how much they are willing to pay for a click to their ad but they are charged instead a lesser amount equal to the next lower competing bid. Apart from its elegant simplicity, this scheme has been quite successful in terms of its generated revenue as well. In the literature, this system is commonly known as the \emph{Generalized Second-Price} (\GSP) auction and, by now, a large volume of work has emerged on the study of \GSP ~auctions and related mechanisms, see e.g., the surveys \cite{LPSV07} and \cite{MMNST12}.

However, 
not all is positive about this mechanism. It was quickly pointed out that the mechanism was not \emph{truthful}, i.e. it is not in an advertiser' s best interest to declare his true intention as to how much he is willing to pay. Immediately, this raises the troublesome question as to how should an advertiser behave, a question which proved to be significantly complex to resolve. Erratic behavior has been found to be damaging for both the advertisers who cannot clearly evaluate and plan their campaign' s performance and the search engine which sees its own profits declining~\cite{EO07}. Stability in such systems is thus extremely desired, where the advertisers are happy with their choices and the whole system remains trackable to monitor and evaluate.
an aspect that has been often ignored, especially in the early literature, is the presence of a budget constraint, requested from the advertisers to limit their exposure and expenditure. We believe this is a key parameter, essential in accurately understanding and evaluating the systems used in practice.

%LATIN 2 paragraph removed  
In the last years, there have been more attempts to take budgets into account, as we explain in the Related Work Section. This has led to some recently proposed mechanisms, which however tend to be complex, they lack the simplicity of \GSP\ auctions and are unlikely to be implemented in practice. Hence, a question that still remains unanswered is whether we can have simple second price mechanisms, that account for budgets and still possess desirable properties with respect to the stability of their outcomes. This is a question of general interest, beyond sponsored search auctions, since budget constraints can be present in other mechanism design problems as well. 

\medskip
\noindent{\bf Our contribution.}
Our main conceptual contribution is a study of Generalized Second-Price auctions under the presence of budgets. First we showcase that ignoring these constraints might lead to unstable outcomes. We then introduce three simple and natural extensions of the GSP mechanism, that take budgets into account. As it is not straight-forward to define a single natural mechanism, we define these variants motivated by the key desirable properties of second-price auctions. For all mechanisms, we investigate the existence of Nash equilibria and envy-free assignments, which are the main notions of stability that have been considered in the literature. For the first mechanism, we show that a Nash equilibrium might not always exist, yet an envy-free assignment is always achievable. For the other two mechanisms, we show that they always possess envy-free Nash equilibria, in fact we show that any envy-free assignment can be realized as an equilibrium of the mechanisms under consideration. In our model, we consider the budget as part of a bidder's strategy, i.e., we have {\em private} budgets. An interesting and surprising outcome of our study is that in the case of public budgets Nash equilibria do not always exist, despite the existence of envy-free assignments. In contrast to mechanism design problems, where having public budgets usually eases the design of a truthful algorithm, here we realized that having public budgets may eliminate the existence of stable profiles. %added for LATIN
Of separate interest might be an algorithmic process than can construct an envy-free assignment in our model.
%removed for LATIN
Overall, we believe our work can serve as a starting point for studying further the properties of GSP auctions that take budgets into account.

\subsection{Related work}

Varian~\cite{varian2007position} and Edelman \etal~\cite{Edelman} have been the two seminal works on equilibrium analysis of \GSP\ auctions without budgets. 
They established the existence of a Nash equilibrium which also satisfies other desirable stability properties such as being welfare-maximizing and envy-free. A further analysis of envy-free Nash equilibria, by taking into account the quality factor of the advertisers was also provided in \cite{lahaie2007revenue}.

The notion of budget constraints has been introduced in various models and objectives, such as, among others, in~\cite{borgs2005}, \cite{Chakrabarty2007budget}, \cite{FMPS07}, \cite{Charles13} and \cite{Fiat11}. Recent work on truthful mechanism design, mainly inspired by the clinching auction of Ausubel~\cite{ausubel2004efficient}, has led to the introduction of truthful Pareto-optimal mechanisms in the presence of budgets, see e.g. \cite{Dobzinski08,GML12,ColiHLS12}. %LATIN change
Ashlagi \etal introduced the model we'll be using in~\cite{ABHLT13}. %End LATIN change
However, all these mechanisms employ techniques that are very different from the {\GSP} scheme in order to achieve truthfulness. As a result, they lack the simplicity of second price auctions at the expense of achieving better properties.

The work of Arnon and Mansour~\cite{arnon2011repeated} is conceptually closer to our approach. They studied second-price auctions with budget constraints but their model simplifies the items for sale to clicks, as opposed to the slots, allowing a player to potentially receive more or less clicks than a single slot could offer. This deviates from the one player per slot paradigm used in practice.

Finally, a different direction that has been pursued recently is the performance of mechanisms in terms of the generated social welfare.
Price of Anarchy analysis for auctions was initiated in \cite{CKS08}, see also \cite{CKK+12}, for sponsored search auctions without budgets. For certain settings with budget constraints, some results have been recently obtained in \cite{ST13} (which however do not have any implications for our proposed mechanisms). Our work does not focus on Price of Anarchy, which we leave for future research, but on existence of stability concepts.
%%%%%%%%%%%%%%%%%%%%%%%%%%%%%%%%%%%%%%%%%%
\section{Preliminaries}
%%%%%%%%%%%%%%%%%%%%%%%%%%%%%%%%%%%%%%%%%%%%%%%%%%%%%%%%\subsection{Model}
%New paragraph for LATIN
Our model is the same as in Ashlagi \etal~\cite{ABHLT13}, a natural extension to budget limited players of the model introduced by Varian~\cite{varian2007position} which is widely adopted in related literature.

We assume we have $k$ slots, each with a fixed, distinct\footnote{This assumption is derived from the distinct space these slots occupy on a web-page.} and publicly known click-through rate (\CTR), $\theta_{j}$ for slot $j$, representing the number of clicks received in a fixed time period (typically a day), independently of the advertisement displayed. Let us order the slots such that $\theta_1> \theta_2 > \ldots > \theta_k$. 
%VM: I COMMENTED THIS OUT AS IT IS STANDARD BY NOW, NO NEED TO DISCUSS IT I THINK
Even though the click-through rates are probabilistic in nature, we will make the typical assumption that they are \emph{deterministically} realized for simplification purposes; $\theta_i$ will really correspond to the expected click-through rate of slot $i$. 
Contrary to the numerical ordering, we will typically use the terminology ``higher'' and ``lower'' slots referring to slots of higher and lower \CTR. Finally, for ease of illustration, we will ignore the bidder-dependent quality factor that is usually incorporated in calculating click-through rates in the separable model.
%and again for simplicity, we will assume normalized rates such that $\theta_1=1$.

We have $n\geq k$ players (advertisers). Each player $i$ has a private \emph{valuation} $v_i$ representing the perceived value per click. Each player also has a \emph{budget constraint} $B_i$, indicating the total amount he is willing to spend in a fixed time period, not on a per click basis. We will also assume that these budget values are pairwise distinct. This assumption has been necessary in other works as well \cite{ABHLT13}, and affects many properties in related mechanisms~\cite{Dobzinski08,ABHLT13}. In fact, as we will exhibit later on, envy-free assignments, which is one of the stability concepts we are interested in, are not guaranteed to exist when budgets are not distinct. To see this assumption is necessary,
consider the example in Figure~\ref{fig:notEF} and assume we have two slots with $\theta_1=1$ and $\theta_2=0.5$. 
The main idea behind this example is that when the slots become affordable, prices have already come down too low, and therefore envy will arise. Interestingly enough, notice that if we differentiate the budgets just infinitesimally, the counterexample seizes to hold.

\begin{figure}[t]\center
\begin{tabular}{|c|c|c|}
\hline  & Player 1 & Player 2 \\
\hline Value & 8  & 6  \\
\hline Budget & 2 & 2 \\
\hline
\end{tabular}
\caption{Example of an instance with equal budgets where no envy-free assignment exists. In this example we have two slots with $\theta_1 = 1$, $\theta_2 = 0.5$.}
\label{fig:notEF}
\end{figure}
Hence, similarly to previous works, we also choose to adopt the distinctness of budgets.

Each player $i$ is interested in maximizing $\theta_{s(i)} (v_i - p(i))$, where $s(i)$ is the slot assigned to $i$ and $p(i)$ the accompanying price per click requested by the respective mechanism. At the same time, he must also satisfy the budget constraint, $\theta_{s(i)} p(i)\leq B_i$. If this condition holds, we say the player can afford slot $s(i)$. %LATIN change
We wish to enforce strict budget constraints so we define the utility of the players whose budget constraints are violated to be minus infinity as is typically done in the literature; any other negative value would also serve our purpose (i.e., budget violations are less desirable than not getting a slot). More formally,
\[ u_i = \left\{ \begin{tabular}{ll} $0$, & if $i$ was not awarded a slot,\\$\theta_{s(i)}(v_i - p(i)),$& if $\theta_{s(i)} p(i)\leq B_i$,\\$-\infty,$& otherwise.\end{tabular}\right. \]

\subsection{Second-Price Auctions under Budgets}

The players submit \emph{value-bids} $b_i$, representing the maximum amount they are willing to pay per click. These bids do not necessarily form a truthful declaration of the players' values to the mechanism. Similarly, the players also submit a \emph{budget-bid} $g_i$ to declare their budget. We will use the term bid to refer to the combination of these two types or to one particular type when clear from context.  In the case of ties, we assume there exists a fixed a priori defined ordering of the players based on which tie-breaking is resolved.

%LATIN change
It is not trivial to introduce mechanisms that take budgets into account in a straightforward and natural way. To address this, we first ponder what constitutes a second-price mechanism by noting some key properties of generalized second-price auctions:
%end LATIN change
\begin{itemize}
\item[$\bullet$] The slot allocation should be performed by a simple and efficient process. 
\item[$\bullet$] The allocation should be in accordance with the bid ordering. If a player raises his bid he should be getting at least the slot he was getting before and should he lower his bid he should be getting at most the previous slot. 
\item[$\bullet$] Furthermore, if a bidder raises his bid, this should cause his total payment to potentially rise and respectively lowering his bid potentially lowers his payment. 
\item[$\bullet$] Finally, the price per click for each slot should be determined by either the next lower bid, the bid of the player awarded the next slot or the minimum bid required to obtain the slot. While these three concepts coincide in the regular {\GSP} mechanism, this is not the case when one introduces budgets. 
\end{itemize}

We first consider the {\GSP} mechanism without budgets in our context, only to highlight that ignoring budgets can lead to unstable outcomes. For notational consistency with the other mechanisms, we refer to this mechanism as Budget-Oblivious.
\begin{definition}[Budget-Oblivious]
The \emph{{budget-oblivious}} second-price auction, in short, \BOSP, orders the value-bids in decreasing order and then assigns the slots in that order, \emph{ignoring} the budget constraints. Naturally, the price for each slot is determined by the immediately lower value-bid.
\end{definition}
%Removed for LATIN to save some space.
In this mechanism, the prices are in decreasing order however some budget constraints might be violated and that responsibility is passed on to the players. Note that strategic budget-bidding is not of interest in our mechanism since this type of bids is ignored by the mechanism.

%LATIN 2 change
Since \BOSP, as is shown in Section \ref{sec:BOSP}, does not have good stability properties, we turn our attention on mechanisms that respect the budget constraints by not assigning slots/prices to players that can afford them as declared by their budget-bids. The first interpretation of second-price pricing, charging the next lower bid, leads us to the following mechanism.

%This can be done by either acting like a regular \GSP\ but not allowing players to receive slots that would drive them out of budget according to their bid declarations or by running a new mechanism that considers both bid declarations.

%First, we define variants that are based on \GSP\ with value-bids.
%End LATIN change
\begin{definition}[Budget-Conscious by Price] We define the \emph{budget-\-con\-scious by price} second-price auction, in short, \BCP, as the mechanism which first orders the value-bids in decreasing order and assigns a price per click for each player equal to the immediately lower value-bid in the bid ordering. Then, \BCP\ assigns the players to slots in order of decreasing value-bids, \emph{respecting} the budget constraints of each player as declared by their budget-bids, by assigning each player to the highest unassigned slot he can afford with his assigned price. If the player cannot afford any slot, he is left unassigned and he is not charged anything.
\end{definition}
Note that under this mechanism a player assigned to a slot might end up paying more per click than a player in a higher slot but we are guaranteed that all budget constraints of assigned players are satisfied. Also note that some slots might end up unassigned if no player can afford to occupy them. 

%LATIN 2 change
The mechanism above is a natural way to guarantee budget compliance but raises a fairness issue, as players might be declaring value-bids as the maximum amount they are willing to pay and getting a slot that they cannot afford to pay if they were to pay their own bid. As will be evident in the later sections, this can lead to players intentionally raising their bid to just below their competitor's bid. The following mechanism addresses this.
%End LATIN 2 change
\begin{definition}[Budget-Conscious by Bid]
We define the \emph{{budget-conscious by bid}} second-price auction, in short, \BCB, similarly to \BCP\ except the mechanism now requires the players to be able to afford their slot if they were to pay a price per click equal to their own value-bid. The players are ordered in decreasing order of value-bids and the price of each player is set to the next lower bid. Then, the players are assigned from the highest bidder to the lower, one by one, to the highest available unassigned slot that they can afford should they were to pay their own bid. %Of course, the actual price charged is the next lower bid but this check ensures that the players declare a bid they can afford to be charged with.
\end{definition}
%LATIN 2 change, moved from above
The second way of interpreting second-price prices, charging prices equal to the value-bid of the player ending up occupying the next slot, first, has definitional issues since we cannot know if our player can afford a slot without knowing who gets the next one and secondly, might charge a player more than his bid. For these reasons, we do not investigate mechanisms of this type in this work.
%LATIN 2 end change

%LATIN 2 change
Finally, we introduce a mechanism that considers what the players are willing and afford to pay for a slot by considering the minimum as implied from their value and budget bids. This mechanism essentially captures pricing by charging the minimum amount required to obtain a slot.
%End LATIN 2 change
\begin{definition}[Best Offer Budget-Conscious]
We define the \emph{{best offer bud\-get-conscious}} second-price auction, in short, \BCBO, as the mechanism which intuitively awards each slot to the player that can offer the most ``money'' but charges them the next lower amount offered. More formally, each slot $s$, one by one from higher to lower, is awarded to the unassigned player with the largest $min \{b_i, g_i/\theta_s\}$  and he is charged a price per click equal to the second largest such value among unassigned players. We note that under this mechanism, the price charged for each slot is the minimum bid required to secure the slot. Alternatively, one can think of the slot rewarded to the player with the highest $min \{\theta_s b_i, g_i\}$, and paying the second highest such amount, representing the total offer of the player and the total price charged.
\end{definition}
It should be pointed out that in \BCBO, in turn of decreasing \CTR, is offered to the player who  is willing to pay the most, given that does not violate  his declared budget;  he is then charged what the second such player would pay (not counting players who already got a slot). Whereas in the previous two budget-conscious mechanisms, each player, in turn of its bid, chooses the best object he can afford and pays the bid  of the next player in line. The above two approaches are obviously equivalent in any  mechanism that does not refuse giving a slot to a player who cannot afford it. However, once we introduce into the mechanism the additional requirement of refusing to give objects to anybody who cannot afford it, then the above distinction becomes necessary. 

In all mechanisms, we assume that players not awarded a slot are not charged a payment and that the price of the lowest bidding player is zero, should he be awarded a slot. Finally, given a finite set of players, we note that the allocation of slots and pricing can be determined efficiently in all defined mechanisms.

\subsection{Stable assignments}
\label{subsec:stable}

It is easy to see that none of the mechanisms defined above are \emph{incentive-compatible}. There are cases where a player might receive a higher utility by ``lying'' about his value and getting a lower slot at a beneficial price, even if other players are truthfully bidding their values. Naturally, we turn our attention to notions of stability, a requirement to analyze significant properties of these auctions and in general a desired property for the advertisers as well. As usual, we will focus on the notion of \emph{Nash equilibrium}.
\begin{definition}[Nash Equilibrium]
A profile of bids, $\langle b_i,g_i \rangle$ for each player $i$, forms a Nash equilibrium if no player has an incentive to deviate to a different strategy $\langle b_i',g_i'\rangle$, for any $\langle b_i',g_i'\rangle$. 
\end{definition}
In related work~\cite{varian2007position,Edelman}, the notion of \emph{symmetric} or \emph{envy-free} equilibrium was defined. Under the generalized second-price auction without budget constraints, this class of envy-free equilibria is a subset of Nash equilibria. In the presence of budgets, this still holds for \BOSP, \BCB and \BCBO, but it does not hold for \BCP\ as shall be demonstrated later.
\begin{definition}[Envy-Free Assignment]\label{def:envyfree}
We define an envy-free assignment as a slot allocation $s(\cdot)$, where no slot is left unassigned, along with a set of prices per click $p(\cdot)$, assigning slot $s(i)$ to player $i$ and charging him $p(i)$ per click, such that for all players $i$ we have
\[ \forall i' \mbox{ with } 1\leq s(i')\leq k, u_i \geq \left\{ \begin{tabular}{ll} $\max \{ \theta_{s(i')}(v_i - p(i')),0\},$& if $\theta_{s(i')} p(i')\leq B_i$,\\0,& otherwise, \end{tabular}\right.  \]
where $u_i$ is the utility of player $i$ as defined earlier.
\end{definition}
Note that an envy-free assignment also guarantees \emph{rationality}: $p(i)\leq v_i$ for all players $i$. We say that an envy-free assignment is \emph{realizable} under a certain mechanism, if a set of bids exists such that the allocation and pricing generated by the mechanism under this set of bids matches the allocation and pricing of the envy-free assignment.

Under \BOSP, where slot allocation depends only on the value-bids and not on the budgets or budget-bids, the constraints on the bids that realize an envy-free assignment are stricter than those of a Nash equilibrium. The same holds for \BCB, as all players can pay their own bid for their slot and intuitively cannot be forced out of position by someone else' s bid\footnote{In more detail, the instability arises when some player can alter his bid to raise someone else' s price, forcing the mechanism to evict him from his slot based on budget constraints and subsequently benefiting the first player.}. Similarly, under \BCBO, a player cannot get a higher slot without paying more than the current player occupying the slot; again a realizable envy-free assignment effectively produces a Nash equilibrium. Hence, under the mechanisms \BOSP, \BCB, \BCBO, the realizable envy-free assignments form a subset of the set of Nash equilibria.  

Under \BCP\ however, the slot allocation is dependent on budgets and intuitively, one could alter the allocation to his benefit by forcing other players out of budget, hence there might exist bids that realize an envy-free assignment but do not form a Nash equilibrium.
%LATIN 2 removed for space
To see this, consider the example in Figure~\ref{fig:notNE} for one slot with $\theta_1=1$ and the players budget-bidding their true budgets. While Bid (EF) is an envy-free assignment, player 1 can improve his utility by bidding according to Bid 2. Under \BCP, player 1 would still get the slot at price zero, strictly improving his utility. 

\begin{figure}[t]\center
\begin{tabular}{|c|c|c|}
\hline  & Player 1 & Player 2 \\
\hline Value & 10  & 5  \\
\hline Budget & 5 & 3 \\
\hline Bid (EF) & 5 & 4 \\
\hline Bid 2 & 3.5 & 4 \\
\hline
\end{tabular}
\caption{Example (for one slot with $\theta_1 = 1$) of an envy-free assignment which is not a Nash equilibrium under \BCP. }
\label{fig:notNE}
\end{figure}

For the other direction, it is trivial to find an example where the outcome of a Nash equilibrium is not an envy-free assignment in all mechanisms building on the intuition that someone might be envious of someone else' s higher slot but they are not able to get it at that price.

%%%%%%%%%%%%%%%%%%%%%%%%%%%%%%%%%%%
\section{The Budget-Oblivious Second-Price Auction.}\label{sec:BOSP}
%%%%%%%%%%%%%%%%%%%%%%%%%%%%%%%%%%%
\BOSP\ lacks the notions of stability defined earlier. 
\begin{theorem}\label{thm:BOSP}
There are settings where no Nash equilibrium exists under \BOSP.
\end{theorem}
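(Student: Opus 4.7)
The plan is to exhibit a concrete small instance and show, by case analysis on the ordering of the three value-bids, that no bid profile can be a Nash equilibrium. Under \BOSP{} the allocation depends only on this ordering, so in a $3$-player, $2$-slot instance every profile falls into one of $6$ cases, and I would design the instance so that each case admits a strictly profitable unilateral deviation.

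Concretely I would take $\theta_1 = 2$, $\theta_2 = 1$, a budget-tight high-value bidder $(v_1, B_1) = (10, 1)$, a second budget-tight bidder $(v_2, B_2) = (9, 1.1)$ with slightly larger budget, and a ``spoiler'' $(v_3, B_3) = (1, 100)$ whose role is to occupy slot~$2$ whenever that slot's price drops strictly below $v_3 = 1$. Two structural tensions drive the argument. (i)~Whenever P1 or P2 is the top bidder, the top bidder's budget forces the second-highest bid to be at most $B_1/\theta_1 = 1/2$ or $B_2/\theta_1 = 0.55$, so either this holds and the spoiler profitably bids just above the current slot-$2$ occupant to capture slot~$2$ at a price strictly below $v_3$, or it fails and the top bidder receives $-\infty$ utility and strictly prefers to bid low and drop out. (ii)~Whenever P3 is the top bidder, the slot-$1$ price must strictly exceed~$1$ to prevent the outside budget-tight bidder (either P1 or P2) from profitably jumping into slot~$2$; but that same slot-$1$ price then strictly exceeds $v_3 = 1$, making P3's own slot-$1$ utility strictly negative, so P3 strictly prefers to lower the bid and drop out to earn $0$.

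The hard part is not executing the case analysis, which is mechanical arithmetic for each of the six orderings, but rather choosing the parameters so that all six orderings fail at once. If either of P1's or P2's budget is too large, a classical Varian-type envy-free profile survives as a Nash equilibrium; if $v_3$ is too large, P3 is content in slot~$2$ at positive utility and the orderings with P3 occupying slot~$2$ become equilibria; if $v_3$ is too small, the spoiler never has an incentive to jump in. The value $v_3 = 1$ is selected precisely because every slot-$2$ price compatible with P1's or P2's budget is strictly below $v_3$ (so the spoiler can always profitably jump in when P1 or P2 is on top), while every slot-$1$ price that prevents a budget-tight player from jumping into slot~$2$ is strictly above $v_3$ (so the spoiler itself becomes unhappy when placed at the top); this matched pair of strict inequalities is what lets the two dichotomies combine to rule out every ordering.
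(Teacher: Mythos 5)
Your overall strategy --- a three-player, two-slot instance killed by an exhaustive case analysis over the bid orderings --- is the same as the paper's, and your calibration discussion correctly identifies the central tension (the spoiler must want slot~2 whenever a budget-tight player is on top, yet must be unhappy in slot~1 whenever its price is high enough to exclude the others). But the specific instance fails: it admits Nash equilibria. The deviation you rely on in tension~(i) is the excluded player bidding \emph{strictly between} the two winners, and that move disappears when the two winning bids are exactly tied: any bid that outranks the slot-2 occupant then also ties or outranks the slot-1 occupant, and the fixed tie-breaking order sends the deviator either to slot~1 or to no slot at all. Concretely, with tie-breaking order $1 \succ 2 \succ 3$, the profile $b_1 = 0$, $b_2 = b_3 = 0.52$ is a Nash equilibrium of your instance: player~2 wins slot~1 at price $0.52$ (total payment $1.04 \le B_2 = 1.1$, utility $2(9-0.52)>0$), player~3 wins slot~2 at price $0$ (utility $1$; grabbing slot~1 would give him only $2(1-0.52)=0.96<1$), and player~1 is frozen out: slot~1 at $0.52$ per click costs $1.04 > B_1 = 1$, and he cannot be ranked second without bidding at least $0.52$, which by the tie rule ranks him first and violates his budget. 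Analogous tie-supported equilibria exist for every tie-breaking order (e.g.\ $b_1=b_3=0.6$, $b_2=0.3$, locking out player~2, when player~3 beats player~1 in ties). The root cause is that $B_1/\theta_1=0.5$ and $B_2/\theta_1=0.55$ are close together and both far below $B_3/\theta_1$, so there is a whole window of slot-1 prices affordable to the slot-1 winner but not to the excluded budget-tight player, and a tie anywhere in that window is stable.

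Two further remarks. First, your dichotomy~(i) as stated does not cover the orderings in which the spoiler already occupies slot~2 (e.g.\ $b_1 > b_3 > b_2$); there the profitable deviation must come from the excluded budget-tight player, not from the spoiler --- that deviation does exist in the untied case, but it is exactly the one the ties above destroy. Second, your own warning that the case analysis is where such constructions die is well taken: the paper's example (values $10,9,14$, budgets $12,10,8$, $\theta=(1,0.01)$) dismisses the ordering $b_1 > b_3 > b_2$ with ``the rest of the cases follow similarly,'' yet a profile such as $(b_1,b_2,b_3)=(9.5,\,0,\,9)$ there appears to survive every unilateral deviation (player~2 cannot enter slot~2 below his value~$9$, and player~3 cannot afford slot~1). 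So verifying \emph{all} orderings, including tied profiles, is genuinely the substance of this theorem and must be carried out explicitly rather than deferred to ``mechanical arithmetic.''
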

\begin{proof}
We will show that \BOSP\ lacks the notions of stability defined earlier by showcasing a setting where a Nash equilibrium might not necessarily exist. Since budget-bids do not affect the allocation or pricing, they are not really relevant in this mechanism.

First, note that any set of bids where any player ends up violating his true budget constraint cannot be an equilibrium as the player can get at least zero utility by underbidding the last player. Therefore, we restrict ourselves on bids which deliver an assignment where all the budget constraints are satisfied. 

We will present a setting with 3 players and 2 slots. The high click-through rate of the first slot combined with the budget constraint will force the price of the top player at a very low point. Subsequently, the bids of the lower players need to be at such a low value that a stable outcome between them cannot be achieved.

We assume two slots with $\theta_1=1, \theta_2=0.01$. We also have 3 players with parameters as shown in Figure~\ref{fig:IPA}. Let us consider the case where $b_1\geq b_2 \geq b_3$ and the mechanism orders the players in the same order, ties accounted for, giving slot 1 to player 1 and slot 2 to player 2. It follows that $p(1)=b_2$ and $p(2)=b_3$. We will show that player 3 can improve his utility by overbidding player 2.

\begin{figure}[t]\center
\begin{tabular}{|c|c|c|c|}
\hline  & Player 1 & Player 2 & Player 3 \\
\hline Value & 10  & 9 & 14  \\
\hline Budget & 12 & 10 & 8 \\
\hline
\end{tabular}
\caption{Example of non-existence of Nash equilibrium under \BOSP for two slots with $\theta_1=1, \theta_2=0.01$.}
\label{fig:IPA}
\end{figure}
Since the budget constraints are satisfied, it follows
\begin{align*}
\theta_1 p(1) &\leq B_1 \\
b_2=p(1)&\leq B_1/\theta_1 = 12.
\end{align*}
Let us consider the case where player 3 considers overbidding player 2 but not player 1. If $b_1=b_2$ and the tie breaking rules do not make this possible consider the example where the properties of the players are exchanged accordingly. Player 3 will have utility $\theta_2 (v_3 - b_2)\geq 0.02$ strictly greater than the zero utility he currently has. Note that we are guaranteed to satisfy the budget constraint of player 3 based on the low rate of slot 2.

Now, let us consider the case where $b_3\geq b_1 \geq b_2$ and the players are ordered in this fashion. It must be the case that $b_1\leq 8$. But similarly to above, player 2 will gain positive utility by overbidding player 1 and getting slot 2.

The rest of the cases follow similarly and are omitted. 
\qed\end{proof}

Since under \BOSP, realizable envy-free assignments are a subset of Nash equilibria  it follows that:
\begin{corollary}
There are settings where no realizable envy-free assignment exists under \BOSP.
\end{corollary}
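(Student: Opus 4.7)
The plan is to obtain this corollary as an immediate logical consequence of Theorem~\ref{thm:BOSP} combined with the set-inclusion observation made in Subsection~\ref{subsec:stable}. Recall that in that subsection the authors argue that under \BOSP\ any realizable envy-free assignment must also form a Nash equilibrium: this is because in \BOSP\ the slot allocation depends only on the value-bids, so the constraints required to realize an envy-free assignment are strictly stronger than those defining a Nash equilibrium. In symbols, if $N$ and $E$ denote the sets of Nash equilibria and bid-profiles that realize envy-free assignments (respectively) on a fixed instance, then $E\subseteq N$.

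Given this inclusion, the corollary follows by contrapositive: if $N=\emptyset$, then certainly $E=\emptyset$. So the plan is to simply reuse the instance of Figure~\ref{fig:IPA}, namely the three-player, two-slot example with $\theta_1=1$, $\theta_2=0.01$, values $(10,9,14)$ and budgets $(12,10,8)$. Theorem~\ref{thm:BOSP} already establishes that no Nash equilibrium exists on this instance, and therefore no bid-profile can realize an envy-free assignment on it either.

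The only step that requires a sentence of justification is the set-inclusion $E \subseteq N$ in the particular case of \BOSP; since that has been argued in Subsection~\ref{subsec:stable}, I would simply invoke it and conclude. There is no real obstacle here: the corollary is a one-line deduction from the theorem and the inclusion, and the hardest part has already been done in the construction underlying Theorem~\ref{thm:BOSP}.
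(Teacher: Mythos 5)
Your proposal is correct and follows exactly the paper's own route: the paper derives the corollary in one line by citing the inclusion of realizable envy-free assignments in the set of Nash equilibria under \BOSP\ (argued in Subsection~\ref{subsec:stable}) together with Theorem~\ref{thm:BOSP}. Your explicit contrapositive phrasing and reuse of the Figure~\ref{fig:IPA} instance match the intended argument precisely.
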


Let us point out here that in Lemma \ref{lem:existence}, stated below, we show that envy-free assignments always exist (under the assumption of distinctness  of budgets). By the above corollary, such assignments are not realizable under BOSP.% (in particular, if we apply the proof of Lemma \ref{lem:existence} to the example of Figure \ref{fig:IPA}, we get a higher price for the low CTR slot, i.e., we get an assignment not realizable under BOSP).

%%%%%%%%%%%%%%%%%%%%%%%%%%%%%%%%%%%%%%%
\section{The Budget-Conscious by Price Second-Price Auction.}
%%%%%%%%%%%%%%%%%%%%%%%%%%%%%%%%%%%%%%%
We now turn our attention to \BCP. We first show

\begin{theorem}
\label{thm:no-eq}
There are settings in which no Nash equilibrium exists for \BCP.
\end{theorem}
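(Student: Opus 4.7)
The plan is to exhibit an explicit small instance in which no Nash equilibrium exists under \BCP. Guided by the example surrounding Figure~\ref{fig:notNE}, the construction should have $2$ slots with well-separated click-through rates (for instance $\theta_1=1$ and $\theta_2$ noticeably smaller) and $3$ players with pairwise distinct values and pairwise distinct budgets, calibrated so that each player's ability to afford each slot at plausible per-click prices sits near the boundary of its budget. The precise numbers must be chosen so that, in every candidate allocation, some player can tweak its value-bid and either evict the player just above it from its slot, or step into a slot made free by an eviction.

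First I would reduce the search for a Nash equilibrium to finitely many structural cases: each player ends up assigned to slot $1$, to slot $2$, or to nothing, and the value-bids of assigned players must be ordered consistently with \BCP's descending assignment rule. That leaves only a handful of allocation patterns. For each pattern I would pin down the range of prices that would make the pattern feasible and then display, within that range, at least one player who has a strictly beneficial deviation.

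The two deviations that do the heavy lifting are both specific to \BCP. A player's own per-click price is set by the immediately lower bid in the ordering, so raising one's value-bid to just below the bid of the player currently above does not change one's own price, but inflates that upper player's charge toward the upper player's own bid. If the resulting charge exceeds the upper player's budget for its current slot, the upper player is evicted and the deviator captures the freed slot at its unchanged (lower) price. A symmetric variant consists in lowering one's bid so that another player overtakes one in the ordering and then faces a price that exceeds its own budget, again triggering an eviction that frees a higher slot. In each of the allocation patterns identified above I would exhibit one of these two deviations and verify it strictly increases the deviator's utility.

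The main obstacle will be calibrating a single instance whose parameters rule out every candidate equilibrium simultaneously, because each eviction requires a tight relation between the evicted player's budget, the deviator's bid, and the CTR of the relevant slot; tuning the numbers for one case can easily re-stabilize another. A further subtlety is the treatment of configurations in which some slot is left unassigned --- a feature allowed by \BCP --- which must also be eliminated by arguing that some unassigned player can profitably enter via one of the deviations above. Although the bid space is continuous, the critical thresholds at which the allocation changes (notably the bid at which an upper player is evicted) are well-defined for the chosen instance, so verifying the strict utility improvements reduces to checking concrete numerical inequalities.
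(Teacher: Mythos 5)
Your plan for the core counterexample matches the paper's: a concrete instance with $2$ slots of well-separated CTRs and $3$ players (the paper uses $\theta_1=1$, $\theta_2=0.4$, values $50,16,8$, budgets $50,5,2$), an exhaustive case analysis over bid orderings and allocation patterns (including patterns with an unassigned slot), and exactly the two deviations you describe --- underbidding just below the player above so that his price jumps to your new bid and exceeds his budget for his slot, or slipping into a slot at an affordable price once the thresholds are set right. That part of the proposal is sound and is essentially the paper's Lemma~\ref{lem:BASPA_budgets_public}.

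However, there is a genuine gap: your entire analysis is carried out over value-bids only, with affordability judged against the players' \emph{true} budgets. In \BCP\ a strategy is a pair $\langle b_i, g_i\rangle$, and the mechanism decides eviction using the \emph{declared} budget-bid $g_i$, not $B_i$. A candidate equilibrium may therefore have players misreporting their budgets (e.g.\ declaring a large $g_i$ precisely so that your eviction deviation no longer triggers the mechanism's affordability check), and your case enumeration does not cover such profiles. The paper closes this with a separate reduction (Lemma~\ref{lem:BASPA_budget_bids}): if any Nash equilibrium exists, then one exists in which all players budget-bid truthfully, so it suffices to refute equilibria in the truthful-budget subspace. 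You need either this reduction or a case analysis that explicitly quantifies over arbitrary budget-bid profiles; as written, the proposal only establishes non-existence of equilibria in which budgets are reported truthfully, which is strictly weaker than the theorem.
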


To prove Theorem \ref{thm:no-eq}, we will first show that Nash equilibria do not always exist in the special case where players are budget-bidding their true constraints and then extend the result to the general case.

\begin{lemma}
\label{lem:BASPA_budgets_public}
There are settings where players are budget-bidding their true constraints, and no Nash equilibrium exists under \BCP.
\end{lemma}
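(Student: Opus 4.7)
My plan is to exhibit a small concrete instance (two slots, three bidders) and show by exhaustive case analysis on the value-bid ordering that no profile of value-bids can constitute a Nash equilibrium, assuming throughout that each $g_i = B_i$.

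First I would set up the instance carefully. I would pick $\theta_1$ much larger than $\theta_2$ (for example $\theta_1 = 1$, $\theta_2$ small), together with three players whose valuations are comparable but whose budgets are arranged so that (i) the top bidder cannot afford slot~$1$ at a ``natural'' second-price price set by the middle bidder, unless that middle bid is kept low, and (ii) the low slot is cheap enough that overbidding to grab slot~$2$ is always profitable for whoever is excluded. This mirrors the structure used for \BOSP\ in Theorem~\ref{thm:BOSP}, but now I must also rule out deviations that exploit the \BCP-specific trick of \emph{raising} one's bid in order to push someone else's price above their budget and thereby displace them.

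Next I would prune the strategy space. Any equilibrium profile must satisfy all assigned players' true budget constraints (otherwise the violating player can drop to a tiny bid and secure at least utility $0$), and must be individually rational (no one bids above their value). Because in \BCP\ the outcome is determined solely by the bid ordering and by which player can afford which slot at their assigned price, I only need to analyse finitely many qualitative profiles: one per permutation of the three bidders, refined by whether each player's per-click price $b_{(j+1)}$ is below the critical affordability thresholds $B_i/\theta_1$ and $B_i/\theta_2$. For each such qualitative profile I would exhibit a concrete deviation: typically either the player assigned slot~$2$ (or left out) can overbid upward to grab the higher slot at a price his budget tolerates on the low \CTR\ slot, or a player currently in slot~$1$ can lower his bid slightly to decrease the price he pays while keeping the ordering he needs, or---the characteristically \BCP\ deviation---a player below can raise his bid just enough to push the middle player's second-price above $B_{\text{middle}}/\theta_1$, thereby bumping that player to slot~$2$ and inheriting the vacated slot at a favorable price.

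The main obstacle will be choosing the numerical parameters so that \emph{every} one of these six orderings admits a strict, budget-respecting, individually rational deviation simultaneously. In particular I must prevent any ``cyclic'' equilibrium where each player's preferred deviation would itself be unprofitable because it drags the price or the displacing threshold with it. Concretely, this means verifying that the thresholds $B_i/\theta_j$ strictly separate the ranges of bids compatible with each ordering, and that the value gaps $v_i - p(i)$ remain positive after each exhibited deviation. Once the parameters are tuned---a direct but finicky verification---the case analysis will close, establishing the lemma.
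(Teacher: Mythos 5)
Your strategy is exactly the one the paper follows: a two-slot, three-player instance, a pruning of the strategy space to budget-feasible and individually rational profiles, and an exhaustive sweep over bid orderings and slot assignments in which each qualitative profile is defeated by one of the two deviation types you identify (overbidding a low bid to grab a slot, or the \BCP-specific move of repositioning one's bid so that a budget-constrained rival is priced out of his slot). However, as written the proposal has a genuine gap: the entire content of the lemma is the existence of a parameter choice for which \emph{all} of these deviations are simultaneously available, and you explicitly defer that to ``a direct but finicky verification'' without exhibiting the numbers. For a pure existence-of-counterexample statement, the tuned instance \emph{is} the proof; listing the constraints the instance should satisfy does not establish that they are jointly satisfiable, and your own worry about a ``cyclic'' equilibrium is precisely the failure mode that only a concrete verification can exclude.

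To close the gap you can take the paper's instance: $\theta_1=1$, $\theta_2=0.4$, values $(50,16,8)$ and budgets $(50,5,2)$ for players $1,2,3$, with ties favoring player $3$ and then player $2$. The two hinges are the thresholds $B_2/\theta_1=5$ and $B_3/\theta_2=5$. In the ``natural'' ordering $b_1>b_2>b_3$ with players $1$ and $2$ in slots $1$ and $2$, if $b_2>5$ then player $1$ can bid $b_2-\eps$; player $2$ then cannot afford slot $1$ at his new price, so player $1$ inherits slot $1$ at the lower price $b_3$. If instead $b_2\le 5$, player $3$ can bid $b_2+\eps$ and win slot $2$ at price $b_2$, which he can afford (since $0.4\cdot 5=2=B_3$) and which yields utility $0.4(8-b_2)>0$. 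All remaining orderings and assignments are dispatched by the same two moves; the paper tabulates them exhaustively. With those numbers in hand your outline becomes a complete proof along the same lines as the paper's; without them it remains a proof schema rather than a proof.
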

\begin{proof}
Consider two slots with $\theta_1=1,\theta_2=0.4$ and 3 players with attributes as shown in Figure~\ref{fig:EPA}. We assume that the tie-breaking ordering favors player 3 and then player 2. In order to show that a Nash equilibrium does not exist we have to consider all orderings of bids and for each such case all possible slot assignments. Intuitively, we will showcase two types of instability. If a bid is low so that the player paying it has the budget constraint satisfied then it can be easily overbid or if a bid is high then underbidding below it will force that player out of budget for the slot.

\begin{figure}[t]\center
\begin{tabular}{|c|c|c|c|}
\hline  & Player 1 & Player 2 & Player 3 \\
\hline Value & 50  & 16 & 8  \\
\hline Budget & 50 & 5 & 2 \\
\hline
\end{tabular}
\caption{Example of non-existence of Nash equilibrium under \BCP, having two slots with $\theta_1=1, \theta_2=0.4$. }\label{fig:EPA}
\end{figure}

We start by considering the case where $b_1> b_2 > b_3$ and slot 1 is assigned to player 1 and slot 2 to player 2. This means that these players can afford these slots, therefore we must have $b_2\leq 50$ and $b_3\leq 12.5$. If $b_2>5$ then player 1 can bid $b_2-\eps >b_3$, for some small $\eps$, and still get slot 1 at a lower price, since player 2 cannot afford it. If $b_2\leq 5$ then player 3 can bid $b_2+\eps<b_1$ and gain strictly positive utility. We outline in the table 

If player 1 is assigned to slot 2 because he cannot afford it and player 2 gets slot 1, we must have $b_2> 50$ and $b_3\leq 5$. If player 1 bids $b_2-\eps>5>b_3$, then player 2's bid will be the highest bid but he will not be able to afford slot 1 anymore, which will end up at player 1 for a lower price per click than before.

\begin{figure}[t]\label{fig:EPAcases}\center
\begin{tabular}{|c|c|c|c|}
\hline  Bid ordering & Slot 1 & Slot 2 & Comments \bigstrut\\
     \hline
     \multirow{4}[8]*{$b_1>b_2>b_3$} & Pl. 1  & Pl. 3 & Player 2 can bid $b_3-\eps$ and get slot 2.\bigstrut\\\cline{2-4}
           & Pl. 3  & Pl. 1 or $\emptyset$ & \begin{tabular}{c}Player 2 can bid $b_3-\eps$ \\ and get either slot 1 or slot 2.\end{tabular}\bigstrut\\\cline{2-4}
           & Pl. 2  & Pl. 3 & Player 1 can bid $b_2-\eps$ and get slot 1.\bigstrut\\\cline{2-4}
           & Pl. 3  & Pl. 2 & Player 1 can bid $b_3-\eps$ and get slot 1.\bigstrut\\
     \hline
\multirow{6}[12]*{$b_3\geq b_2 \geq b_1$} & Pl. 3  & Pl. 2 & Player 1 can bid $b_2+\eps$ and get slot 1 or slot 2.\bigstrut\\\cline{2-4}
           & Pl. 2  & Pl. 3 & Player 1 can bid $b_2+\eps$ and get slot 1 or slot 2.\bigstrut\\\cline{2-4}
           & Pl. 2  & Pl. 1 & \begin{tabular}{c} Depending on $b_2$, Player 1 can bid $b_2+\eps$ to get\\ slot 1,  or Player 3 can bid $b_2-\eps$ to get a slot.\end{tabular}\bigstrut\\\cline{2-4}
           & Pl. 1  & Pl. 2 or $\emptyset$ & \begin{tabular}{c}Player 2 can bid $b_1-\eps$ and get slot 2\\ at a lower price.\end{tabular}\bigstrut\\\cline{2-4}
           & Pl. 3  & Pl. 1 & Impossible outcome in this bid ordering.\bigstrut\\\cline{2-4}
           & Pl. 1  & Pl. 3 & Impossible outcome in this bid ordering.\bigstrut\\
     \hline
\multirow{6}[12]*{$b_1>b_3\geq b_2$} & Pl. 1  & Pl. 3 & \begin{tabular}{c} Depending on $b_3$, Player 1 can bid $b_3-\eps$ \\ to get slot 1 at a lower price, or Player 2\\ can bid  $b_3+\eps$ to get slot 2.\end{tabular}\bigstrut\\\cline{2-4}
           & Pl. 3  & Pl. 1 & Player 1 can bid $b_3-\eps$ and get slot 1.\bigstrut\\\cline{2-4}
           & Pl. 3  & Pl. 2 & Player 1 can bid $b_3-\eps$ and get slot 1.\bigstrut\\\cline{2-4}
           & Pl. 2  & Pl. 3 or $\emptyset$ & Player 1 can bid $b_3-\eps$ and get slot 1 or slot 2.\bigstrut\\\cline{2-4}
           & Pl. 1  & Pl. 2 &\begin{tabular}{c} Player 1 can bid $b_3-\eps$ and get slot 1\\ at a lower price.\end{tabular}\bigstrut\\\cline{2-4}
           & Pl. 2  & Pl. 1 & \begin{tabular}{c}Player 1 can bid $b_3-\eps$ and get slot 1\\ or slot 2 at a lower price.\end{tabular}\bigstrut\\
     \hline
\multirow{6}[12]*{$b_3\geq b_1>b_2$} & Pl. 3  & Pl. 1 & Player 2 can bid $b_1+\eps$ and get slot 1.\bigstrut\\\cline{2-4}
           & Pl. 1  & Pl. 3 & \begin{tabular}{c}Player 3 can bid $b_1-\eps$ and get slot 2\\ at a lower price.\end{tabular}\bigstrut\\\cline{2-4}
           & Pl. 1  & Pl. 2 & \begin{tabular}{c} Depending on $b_2$, Player 1 can bid $b_2-\eps$\\ to get slot 1 at a lower price, or Player 3\\ can bid $b_1-\eps$ to get slot 2.\end{tabular}\bigstrut\\\cline{2-4}
           & Pl. 2  & Pl. 1 or $\emptyset$ & Player 1 can bid $b_2-\eps$ and get slot 1.\bigstrut\\\cline{2-4}
           & Pl. 3  & Pl. 2 & Impossible outcome in this bid ordering.\bigstrut\\\cline{2-4}
           & Pl. 2  & Pl. 3 & Impossible outcome in this bid ordering.\bigstrut\\
     \hline
\multirow{7}[14]*{$b_2\geq b_1>b_3$} & Pl. 2  & Pl. 1 & \begin{tabular}{c} Depending on $b_2$, Player 1 can bid $b_2+\eps$ to get\\ slot 1, or Player 3 can bid $b_1+\eps$ to get slot 2.\end{tabular}\bigstrut\\\cline{2-4}
           & Pl. 1  & Pl. 2 &\begin{tabular}{c} Player 2 can bid $b_1-\eps$ and get slot 2\\ at a lower price.\end{tabular}\bigstrut\\\cline{2-4}
           & Pl. 2  & Pl. 3 & Impossible outcome in this bid ordering.\bigstrut\\\cline{2-4}
           & Pl. 3  & Pl. 2 & Impossible outcome in this bid ordering.\bigstrut\\\cline{2-4}
           & Pl. 1  & Pl. 3 & \begin{tabular}{c} Depending on $b_3$, Player 2 can bid $b_1-\eps$\\ to get slot 2, or Player 1 can bid \\ $b_3-\eps$ to get slot 1 at a lower price.\end{tabular}\bigstrut\\\cline{2-4}
           & Pl. 3  & Pl. 1 or $\emptyset$ &\begin{tabular}{c} Player 1 can bid $b_3-\eps$ and get slot 1\\ at a lower price.\end{tabular}\bigstrut\\
     \hline
\multirow{7}[14]*{$b_2>b_3\geq b_1$} & Pl. 2  & Pl. 3 &  Player 1 can bid $b_3+\eps$ and get a slot.\bigstrut\\\cline{2-4}
           & Pl. 3  & Pl. 2 & Player 1 can bid $b_3+\eps$ and get slot 1.\bigstrut\\\cline{2-4}
           & Pl. 2  & Pl. 1 & Impossible outcome in this bid ordering.\bigstrut\\\cline{2-4}
           & Pl. 1  & Pl. 2 or $\emptyset$ & \begin{tabular}{c}Player 2 can bid $b_3-\eps$ and get slot 1\\ or slot 2 at a lower price.\end{tabular}\bigstrut\\\cline{2-4}
           & Pl. 3  & Pl. 1 & Player 2 can bid $b_3-\eps$ and get slot 1.\bigstrut\\\cline{2-4}
           & Pl. 1  & Pl. 3 & Player 2 can bid $b_3-\eps$ and get slot 1.\bigstrut\\
     \hline
\end{tabular}
\caption{Possible cases that can occur in the example  in Lemma~1.}\label{fig:EPAcases}
\end{figure}

The rest of the cases follow similarly. We note that the first slot cannot be left unassigned as the last player in the bid ordering can always afford it at zero price. The only way for the second slot to be left unassigned is if both the first and second player in the respective bid ordering cannot afford either slot. In Figure~\ref{fig:EPAcases}, we outline  all possible cases that can occur in the example described in the lemma. In each case, we briefly outline why the profile cannot be in equilibrium.

\qed\end{proof}

\begin{lemma}\label{lem:BASPA_budget_bids}
Under \BCP, if a Nash equilibrium exists then a Nash equilibrium also exists where the players are budget-bidding their true constraints.
\end{lemma}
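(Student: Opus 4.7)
The plan is to take the given Nash equilibrium $(b^*, g^*)$ and build from it a Nash equilibrium $(b', B)$ in which every player budget-bids truthfully. The first attempt is the naive substitution $b' = b^*$, $g'_j = B_j$, and then I would patch only the value-bids of those players for whom this substitution creates a problem.

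For the profile $(b^*, B)$, the payments are unchanged because they depend only on value-bids, so I only need to show the slot assignment is preserved. Proceeding by induction on the value-bid order $\pi$, at step $t$ I would argue that player $\pi(t)$ takes the same slot as in the original NE. If $g^*_{\pi(t)} \leq B_{\pi(t)}$, a strictly higher slot becoming affordable with $B_{\pi(t)}$ would have constituted a strictly profitable deviation in the original NE (raise the budget-bid to $B_{\pi(t)}$), since the price $p(\pi(t))$ is fixed and a higher slot at the same price gives strictly larger utility whenever $v_{\pi(t)} > p(\pi(t))$, which holds for every assigned player with positive utility in NE. The opposite case $g^*_{\pi(t)} > B_{\pi(t)}$ is also benign: for an assigned player the chosen slot's cost must not exceed $B_{\pi(t)}$ (else utility would be $-\infty$) and no higher slot could have required payment in $(B_{\pi(t)}, g^*_{\pi(t)}]$ without the original profile violating the true budget.

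The subtle case is a player $j$ unassigned in the original NE with $v_j \leq p^*(j)$ and $B_j$ large enough to afford some slot that $g^*_j$ did not: no deviation argument rules this out in the original, since the newly-afforded slot would yield non-positive utility, but under $(b^*, B)$ the mechanism would force $j$ into that slot and produce negative utility. To repair this, for every such problematic player $j$ I would replace $b^*_j$ by $b'_j$ strictly below all other value-bids while still setting $g'_j = B_j$. Since $j$ was unassigned originally, removing $j$ from the middle of the bid order and placing them at the very end does not alter the sequence of available slots seen at any other player's turn; every non-problematic player takes the same slot at the same price as before, and the only payment that changes (outside $j$) is that of the player immediately above $j$ in the original order, whose price can only decrease. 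When $j$ eventually acts, all slots have already been taken by others, so $j$ remains unassigned with utility $0$.

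It remains to verify that $P = (b', B)$ is a Nash equilibrium. Utilities under $P$ weakly dominate those under $(b^*, g^*)$, so each player's current strategy is at least as good as before. For any hypothetical profitable deviation $(b_i', g_i')$ in $P$, I would compare its outcome to the outcome of the same deviation played against the original others' strategies $(b^*_{-i}, g^*_{-i})$: because the non-deviators in $P$ have weakly higher budget-bids than in the original, the monotonicity of the greedy slot-assignment (raising a preceding player's budget-bid weakly degrades the slots still available to later players) implies that the deviator's outcome in $P$ is weakly worse than the same deviation's outcome in $(b^*, g^*)$. Hence any profitable deviation in $P$ would also be a profitable deviation in $(b^*, g^*)$, contradicting NE. The main technical obstacle is precisely this monotonicity argument: raising a non-deviator's budget-bid simultaneously claims a higher slot and frees up a lower one, and combined with the reordering caused by the patched value-bids, careful bookkeeping is required to confirm that the deviator's accessible slot set cannot strictly improve as other budget-bids rise.
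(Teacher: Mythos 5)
Your overall strategy---replace all budget-bids by the true budgets at once, argue that the allocation and prices are essentially preserved, patch the value-bids of the players for whom this fails, and then verify the result is still an equilibrium---is close in spirit to the paper's proof, which instead switches one budget-bid at a time and ``shifts up'' the remaining value-bids so that every player keeps exactly the same slot and the same price. However, your write-up has a genuine gap at exactly the point you flag as ``the main technical obstacle,'' and that obstacle is not mere bookkeeping: the monotonicity you invoke is false. First, the premise that the non-deviators in $P$ have weakly higher budget-bids than in the original profile is wrong ($B_i$ may well be smaller than $g_i^*$). Second, even for a player whose budget-bid does rise, the effect on later players is not a weak degradation: if that player now claims a higher slot, he simultaneously releases the lower slot he used to occupy, and the released slot can be strictly better than anything a later player (in particular, the deviator) could previously reach. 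Concretely, with three slots, if a player moves from slot $2$ up to slot $1$, then slot $2$ becomes free at the turn of a later player who was previously relegated to slot $3$; if he can afford slot $2$ at his price, his outcome strictly improves. Hence ``any profitable deviation in $P$ would also be profitable in the original'' does not follow from the argument given, and the Nash property of the patched profile is not established.

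Two further points. (i) Your patch moves a problematic unassigned player $j$ to the bottom of the bid order, which lowers the price of the player immediately above $j$ in the original order; you acknowledge this, yet also assert that every non-problematic player keeps the same slot and price. These cannot both hold, and under \BCP\ a lowered price can itself move that player to a higher slot he could not previously afford, cascading through the allocation. The paper sidesteps this by readjusting the remaining value-bids so that every price is literally unchanged. (ii) Your induction omits the knife-edge case of an assigned player with $v_i = p(i)$ and $g_i^* < B_i$: moving to a higher affordable slot leaves his utility at $0$, so it is not a strictly profitable deviation and the original equilibrium does not rule it out; yet under the truthful budget-bid the mechanism moves him up, perturbing the slots available to everyone after him, and your weak-domination claim fails there. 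The core idea of your proof is the right one, but these steps need to be repaired---most naturally by following the paper's route of preserving the entire price vector exactly---before the lemma is actually proved.
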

\begin{proof}
Suppose a Nash equilibrium set of bids with $\langle b_i,g_i\rangle$ the bid and budget-bid of player $i$ and $s(i)$ is the slot allocated to him. Let us consider the case where player $i$ changes his budget-bid to $B_i$. We argue that the same slot allocation can be achieved. First observe that any change must include a change in the slot or price of player $i$.

If $g_i>B_i$ and it results in a slot change it must be the case that he got a lower slot. Furthermore, $s(i)$ was a slot that he could not afford therefore we could not be in a Nash equilibrium. The same holds if player $i$ simply ends up paying less for the same slot $s(i)$.

If $g(i)<B(i)$ and we have a change to slot $s'(i)$, it must be that $s'(i)$ is a higher slot than $s(i)$. But the same allocation can be maintained if $i$ bids accordingly lower to get $s(i)$ at the same price as before. If this is not possible due to ties, it means $s(i)$ is at the same price as other slots and $i$ was ``losing'' some slot(s) due to budget. But since he can afford one of these higher slots based on his real budget $B_i$, he has strictly greater utility in a higher slot and we wouldn't be in an equilibrium. If no change in the slot of player $i$ occurs but only on his price, it is easy to see that the same argument holds and he can get the same price as before.

Observe that we can also adjust the bids of all other players in a ``shift-up'' fashion to achieve the same prices for everyone as before. We are guaranteed to achieve the same allocation since everyone will be asked to pay the same price for the same slot as before.

We continue to adjust all budget-bids one by one until all players are budget-bidding their true budget and we will still be in a Nash equilibrium. But the bids at that stage would yield a Nash equilibrium.
\qed\end{proof}

Theorem \ref{thm:no-eq} follows
from Lemmas~\ref{lem:BASPA_budgets_public} and~\ref{lem:BASPA_budget_bids}.

\clearpage

Despite the non-existence of Nash equilibria, the {\BCP} mechanism does possess other stability properties, in particular, all envy-free assignments are realizable under the mechanism (and there exists at least one such assignment).
\begin{theorem}
\label{basic}
There exists an envy-free assignment which is realizable under \BCP.
\end{theorem}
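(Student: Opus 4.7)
\medskip
\noindent\textbf{Proof plan.} I split the statement into two parts and tackle them in turn: (i) construct an envy-free assignment $(s,p)$, and (ii) realize it as an outcome of \BCP\ via an explicit bid profile.

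For (i), the plan is to adapt Varian's envy-free construction for \GSP\ to the budgeted setting. Tentatively assign players in decreasing order of value to slots in decreasing order of \CTR, and define prices recursively from the bottom up so that the player in slot $j+1$ is exactly indifferent between slots $j$ and $j+1$; this yields an envy-free allocation in the absence of budgets. Then repair any budget violations: whenever the player tentatively placed in slot $j$ cannot afford $\theta_j p_j$ within her budget, swap her with a lower-value but higher-budget player, and recompute the prices. Distinctness of budgets---which, as Figure~\ref{fig:notEF} shows, is necessary for an envy-free assignment to exist at all---is what keeps this repair process well-defined, by giving a strict ordering of which player can absorb which slot without creating envy.

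For (ii), let $\sigma:[k]\to[n]$ be the inverse of $s$, so that $\sigma(j)$ is the player placed in slot $j$ by the envy-free assignment. Set every budget-bid truthfully, $g_i = B_i$. For value-bids, put the assigned players at the top of the bid ordering in the order $\sigma(1),\ldots,\sigma(k)$ and choose their bids so that the ``next-lower-bid'' pricing rule of \BCP\ produces exactly $p_j$ for slot $j$: concretely, $b_{\sigma(j+1)} = p_j$ for $j=1,\ldots,k-1$, with $b_{\sigma(1)}$ any value strictly larger than $p_1$, and have one unassigned player bid $p_k$ while all others bid $0$. A routine induction on $j$ then shows that \BCP\ on this input outputs $(s,p)$: when $\sigma(j)$'s turn comes, slots $1,\ldots,j-1$ are already claimed by the preceding higher bids, her assigned price under the next-lower-bid rule is exactly $p_j$, and budget feasibility of $(s,p)$ gives $\theta_j p_j \leq B_{\sigma(j)} = g_{\sigma(j)}$, so slot $j$ is the highest unassigned slot she can afford and is selected by \BCP.

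The main obstacle is step (i): controlling the repair of budget violations so that no new envy is introduced and the process terminates in a valid assignment. The key tools for this are the monotonicity $p_1 \geq p_2 \geq \cdots \geq p_k$ forced by envy-freeness (otherwise a player in a lower slot she can afford would envy the higher slot) and the strict budget ordering given by distinctness. Step (ii) is then essentially bookkeeping, relying on the fact that the next-lower-bid pricing rule of \BCP\ can be made to realize any prescribed decreasing price vector by appropriate choice of value-bids, and that \BCP's slot-by-slot greedy assignment naturally aligns with any budget-feasible envy-free allocation.
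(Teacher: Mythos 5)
There is a genuine gap, and it infects both halves of your plan. The key false premise is your claim that envy-freeness forces $p_1 \geq p_2 \geq \cdots \geq p_k$ across slots. Under budgets this fails: the envy-free condition of Definition~\ref{def:envyfree} only constrains player $j$ against slot $s(i)$ when $j$ can \emph{afford} that slot's total payment $\theta_{s(i)}p(i)$, and since $\theta_{s(i)} > \theta_{s(j)}$ a higher slot can have a lower per-click price yet a total payment exceeding $B_j$ (e.g.\ $\theta_1=1,\theta_2=0.1$, $p(1)=5$, $p(2)=10$, $B_j$ slightly above $1$). So a player in a lower slot may legitimately pay a strictly higher per-click price than a player above her. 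This is exactly why the paper needs the technical Lemma~\ref{lem:efconditions}. The consequence for your step (ii) is fatal as written: ordering the winners by slot and setting $b_{\sigma(j+1)}=p_j$ need not produce a decreasing bid vector, so \BCP\ will not process the players in the order $\sigma(1),\ldots,\sigma(k)$ and the next-lower-bid prices will not come out as $p_1,\ldots,p_k$. The paper's Lemma~\ref{lem:efBASPA} instead orders players by \emph{decreasing price}, which may disagree with the slot order, and then uses Lemma~\ref{lem:efconditions} to show that the greedy ``highest affordable unassigned slot'' rule still reproduces the envy-free allocation; that step is not bookkeeping, it is where the content lies.

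Your step (i) is also not a proof. Varian's assortative assignment with bottom-up indifference prices presupposes that the value ordering is the right allocation, which need not hold under budgets, and the proposed repair (swap a violator with a lower-value, higher-budget player and ``recompute the prices'') is unspecified: recomputing prices can create new budget violations and new envy elsewhere, swaps can cascade, and you give no invariant or termination argument. Distinctness of budgets alone does not make this converge. The paper's Lemma~\ref{lem:existence} takes an entirely different route: a descending-price process starting from infinite prices, with a careful case analysis distinguishing players who ``envy'' a slot from those who ``barely want'' it, including a phase where the prices of a whole set of slots are lowered concurrently and proportionally to avoid round-robin cycling, plus a separate argument that temporarily evicted players are eventually re-accommodated. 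Without something of comparable substance, the existence half of the theorem remains unproved.
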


The proof of the Theorem is based on Lemmas \ref{lem:existence} and \ref{lem:efBASPA}.

\begin{lemma}
\label{lem:existence}
Under budget constraints, there always exists an envy-free assignment.
\end{lemma}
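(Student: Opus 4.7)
The plan is to prove existence of an envy-free assignment by giving a constructive greedy procedure that fills slots in order of decreasing click-through rate, and then verifying the envy-free condition. For each player $i$ and slot $j$, I would define the \emph{effective bid} $m_{i,j} := \min\{v_i, B_i/\theta_j\}$, the largest per-click price at which $i$ would be both willing (price at most $v_i$) and able (total payment at most $B_i$) to take slot $j$. A useful monotonicity is that since $\theta_j$ is decreasing in $j$, the ratio $B_i/\theta_j$ is increasing in $j$, so $m_{i,j}$ is non-decreasing as one moves to lower-CTR slots.

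The procedure maintains a pool $U$ of unassigned players, initially of size $n$. For $j = 1, 2, \ldots, k$, I would let $i_j := \arg\max_{i \in U} m_{i,j}$ (using the fixed tie-breaking order), assign slot $j$ to $i_j$ at price $p_j$ equal to the second-largest value of $m_{i,j}$ over $U$, and then remove $i_j$ from $U$. This mirrors the mechanics of \BCBO\ applied to truthful value- and budget-bids. Affordability and rationality of each assigned $(i_j, p_j)$ follow at once from $p_j \le m_{i_j,j} \le \min\{v_{i_j},\, B_{i_j}/\theta_j\}$, and since the number of iterations equals the number of slots, every slot is filled.

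To verify envy-freeness, I would fix $i_j$ and any other slot $j'$. If $j' < j$, then $i_j$ was still in the pool at step $j'$ but was not the maximizer, so $p_{j'} \ge m_{i_j,j'}$; combining with the affordability side-condition $p_{j'} \le B_{i_j}/\theta_{j'}$, either $i_j$ cannot afford $j'$ (so envy is ruled out by definition) or $p_{j'} = m_{i_j,j'} \ge v_{i_j}$, making the utility from $j'$ non-positive. If $j' > j$, I would compare $\theta_j(v_{i_j} - p_j)$ with $\theta_{j'}(v_{i_j} - p_{j'})$, exploiting that $i_j$ was the maximizer of $m_{\cdot, j}$ on $U$ at step $j$ and that $i_{j'}$ was later the maximizer on the shrunken pool at step $j'$.

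The main obstacle is this last case: because $m_{i,j}$ is non-decreasing in $j$ while the pool shrinks, the sequence of prices $p_j$ need not be monotone in $j$, so envy-freeness cannot be obtained from price-monotonicity alone. A delicate inductive bookkeeping is needed, tracking how the second-largest effective bid evolves as successive winners leave $U$ and as $m_{i,j}$ shifts with $j$. This is exactly the point where the distinctness of budgets is indispensable: it is precisely what rules out the pathological configuration of Figure~\ref{fig:notEF}, and it converts the weak comparisons produced by the maximizer arguments into the strict inequalities one needs to chain $\theta_j(v_{i_j} - p_j) \ge \theta_{j'}(v_{i_j} - p_{j'})$ across all $j' \neq j$.
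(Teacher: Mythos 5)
Your construction takes a genuinely different route from the paper (which runs a descending-price process: all slot prices start at $\infty$ and are lowered, sometimes concurrently and proportionally over a set of slots, until some player either just affords a slot or becomes indifferent to it), but your route has a concrete flaw: the dichotomy you use to dispose of the case $j' < j$ is false. You argue that if $i_j$ can afford slot $j'$ at price $p_{j'}$, then $p_{j'} = m_{i_j,j'} \ge v_{i_j}$, so the utility at $j'$ is non-positive. This only works when the minimum in $m_{i_j,j'} = \min\{v_{i_j}, B_{i_j}/\theta_{j'}\}$ is attained by the value term. When it is attained by the budget term, $p_{j'} = B_{i_j}/\theta_{j'} < v_{i_j}$ is exactly affordable for $i_j$ and yields the strictly positive utility $\theta_{j'} v_{i_j} - B_{i_j}$, which can exceed his utility at slot $j$. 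Explicitly: take $\theta_1 = 1$, $\theta_2 = 0.5$, player $1$ with $v_1 = 8$, $B_1 = 2$ and player $2$ with $v_2 = 6$, $B_2 = 1.9$ (distinct budgets, so the instance is legal and, e.g., $p(1)=1.95$, $p(2)=0$ is envy-free). Your greedy gives slot $1$ to player $1$ at the second-highest effective bid $p_1 = m_{2,1} = 1.9$, and slot $2$ to player $2$ at some $p_2 \ge 0$; but player $2$ can afford slot $1$ (total payment $1.9 = B_2$) and would get utility $4.1$ there versus at most $3$ at slot $2$, so he envies slot $1$. Thus the procedure itself, not merely the verification, fails: second-pricing at the runner-up's \emph{budget-binding} threshold leaves that runner-up just able to afford the slot and strictly envious of it.

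This is precisely the difficulty the paper's process is built to avoid: there, the price of a slot is frozen at the \emph{first} point during the descent at which some player wants it, which in the example above is $p_1 = B_1/\theta_1 = 2$, a price the runner-up cannot afford, so envy is excluded by the affordability clause of Definition~\ref{def:envyfree}. Repairing your construction would require pushing each price strictly above the runner-up's budget threshold (an $\eps$-bump), and one then has to re-verify affordability and rationality for the winner and redo the $j' > j$ direction, which you also leave unproved (you only state that ``delicate inductive bookkeeping is needed''). As written, the proposal does not establish the lemma.
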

\begin{proof}
We describe a process that iteratively lowers the prices of slots from infinity and eventually assigns all slots while satisfying the envy-free conditions. We associate a value $U_i$ for each player $i$, intuitively denoting (but not always equal to) their current utility. Originally all $n$ players are unassigned with $U_i$, for all $i$, set to 0. Slots not assigned to a player will be referred to as {\em free}.

During the process, players may move from slot to slot and prices may decrease, but never strictly increase. When a player $i$ is assigned to a slot $s$ with price $p_s$, we update $U_i$ by $\theta_s (v_i - p_s)$ and we will guarantee the assignments respect the budget constraints. It will be evident from the process that a particular $U_i$ may only increase. We will prove that at the end of the process all slots will have been assigned to a player while the envy-free conditions as defined in (\ref{def:envyfree}) will be satisfied and so the proof will be complete.

We start the process by assigning to every slot the same positive price $\infty$ such that no player can afford any slot at that price.

Given the current price $p_s$ assigned to a free slot $s$, we say that a player $i$ (assigned to some slot or not) {\em barely wants} $s$ if (a) $i$ can afford $s$ at its current price $p_s$, $\theta_s p_s \leq B_i$ and (b) $i$'s current $U_i$ is equal to his utility if he were to be assigned to $s$. In case (a) $i$ can afford $s$ at its current price and (b) $i$'s  current $U_i$ is strictly less than his utility if he were to be assigned to $s$, we say that $i$ {\em envies} slot $s$. Finally, we say that $i$ wants $s$ if he barely wants it or he envies it.

Notice that if during a price decrease of a slot, a point is reached where a player wants the slot, whereas he does not want it at a higher price, this may be either because the player barely wants that slot at the current price, or alternatively, the current price is such that the player can just afford this slot.

Now consider an arbitrary free slot $s$ and decrease its current price $p_s$ until a point $p$ is reached where some player (assigned or not)  wants $s$ at  price $p$ (notice that no strict decrease  of $p_s$ might be necessary to attain such desire, as some player may want $s$ at the price $p_s$). Then we distinguish three cases which are examined in order:
\begin{enumerate}
\item There is a player $i$ who envies $s$ at price $p$. Assign  
$s$ to $i$ and set the former's price to $p$, freeing up his previous assigned slot if he had any. Notice that thus if during the lowering of the price of a free slot $s$ a player $i$ wants $s$  and did not previously want it, it is because $i$ can just afford $s$. Therefore by the distinctness of budgets, there is a unique such player, so $i$'s choice is uniquely determined. 
  
\item There are no players who envy $s$ at price $p$ but there are unassigned players that barely want it. These are players who can afford $s$ at the current price $p$ and either $p$ is equal to their value $v_i$ or their utility at $s$ for $p$ is equal to the non-zero value of $U_i$, a situation which can occur from the procedure described in later steps. We assign $s$ to an arbitrary player of this type.

\item There are no unassigned  players who want $s$, no assigned player who envies $s$ but there is a set $I$ of assigned players who want $s$. Obviously, all members of $I$ barely want  $s$.

Notice that on one hand we cannot further lower the price of $s$, lest  envying of $s$ by the players of $I$ concurrently occurs, an undesirable situation especially if $I$ contains more than one element. On the other hand, assigning $s$ to a member of $I$  without further lowering the price $p$  may result in a round robin moving around of assigned players with no price lowering. To avoid this situation, we work as described below: Let $S$ be the set of slots occupied by players in $I$, together with the currently empty slot $s$. We {\em concurrently} decrease the price of {\em all} slots in $S$  {\em proportionally}, i.e. we decrease the price per click at each such slot $t \in S$ by $X/\theta_{t}$ for some $X$. We start from $X=0$ and increase it, until one or more players  want one of the slots. Notice that during this process, all players of $I$ can obtain the same utility from at least two slots in $S$. We distinguish different scenarios.
\begin{enumerate}
\item Only assigned players barely want slots in $S$. These players join $I$, their slots are added to $S$, and we resume the concurrent lowering with the new sets.
\item At most one assigned player envies a slot or at most one unassigned player wants a slot, but not both. In this case, we can assign this slot to the player in question and shift the players of $I$ to slots where they obtain the same utility. This can be done since every player has another slot they can receive the same utility and these paired slots must form a ``chain'' to the empty slot $s$.
\item There is more than one player in the set of unassigned players wanting a slot together with assigned players envying a slot. In this case, we still assign the slots as requested by these players with priority to the envious players. We still potentially shift the players of $I$ to a different slot so that $s$ gets assigned as well but we will not be able to accommodate all of the players who lost their slots; these players will become unassigned.
\end{enumerate}
\end{enumerate}

We continue this process until no slot remains unassigned since there are more players than slots. \if{On the steps of the process that do not end with one more slot assigned, the newly free slots are not envied by anyone, therefore we are guaranteed a strict price decrease when they are examined by the process later on and a player envying them. It is easy to also see that the points where envying occurs are finite, therefore we are guaranteed that this process will terminate in finite time.}\fi
At every step of the process we respect the budget constraints when players are assigned to slots. Also, the envy-free constraints are satisfied except for the players that become unassigned in the last step of the process above, i.e. the unassigned players with positive $U_i$. We will show that when the process terminates no such player will remain unassigned.

Let us focus on one of the players, call him $A$, that will become unassigned in the last step above. This player could receive the same utility from at least two slots, say $1$ and $2$ of $S$ with prices $p(1)$ and $p(2)$ respectively, such that $p(1)\geq p(2)$. If $p(1)=p(2)$, it must be that $p(1)=p(2)=v_A$ and $A$ was getting zero utility, therefore we can safely leave him unassigned and from now focus on the case where $p(1)> p(2)$. 
First, let us note that no pair of evicted players can exist so that player $X$ had equal utility in slots $s_1$, $s_2$ and player $Y$ had equal utility in slots $s_1$, $s_3$ with $s_1$ being the most expensive slot among the three; the player in the cheapest slot would be envious of the other one. Therefore, it is possible to find a unique slot for each evicted player such that he received the same utility from this slot and another cheaper one. These will be the referred slots 1 and 2 for each evicted player.
We will show that the player $B$ ending up at slot $1$ could not be originally unassigned with $U_B< \theta_2(v_B- p(2))$. 

Since $\theta_1 (v_A -p(1))=\theta_2(v_A-p(2))$ and $p(1)>p(2)$ we have $\theta_1 > \theta_2$. Therefore, $\theta_1 p(1) > \theta_2 p(2)$ and if $B$ can afford slot 1, he can also afford slot 2, even at a price lower than $p(2)$. Also $v_B \geq p(1) > p(2)$ which means that if $B$ was unassigned with $U_B < \theta_2(v_B-p(2))$ he would have envied slot 2 at an earlier stage, contradicting our lowering process. We conclude that $B$ was either assigned to a slot or unassigned with $U_B\geq \theta_2(v_B-p(2))$. More interestingly, there cannot exist anymore an unassigned player that envies slot 2 at price $p(2)$. We will now show that this guarantees that $A$ will eventually get a slot with utility at least $\theta_2 (v_A - p(2))$, as much as he was getting before he was ``evicted''.

If $B$ was assigned in the previous scenario, let us assume it was at slot 3 at price $p(3)$. If $B$ is a member of $I$ or even originally in slot 1, we consider slot 3 to be the other slot at which $B$ could receive the same utility. We know that $U_A=\theta_1 (v_A - p(1))=\theta_2 (v_A - p(2))$ and $\theta_1 (v_B - p(1))\geq \theta_2 (v_B -p(2))$ which, with $\theta_1>\theta_2$ and $p(1)>p(2)$, leads us to the conclusion that $v_B> v_A$. We know $A$ did not envy 3 at $p(3)$ and $B$ does not envy 2 at $p(2)$, $\theta_2 (v_A - p(2))\geq \theta_3 (v_A - p(3))$ and  $\theta_2 (v_B - p(2))\leq \theta_3 (v_B - p(3))$, which gives us $\theta_3 > \theta_2$ and $p(3)>p(2)$. It is now easy to see that $A$ could get utility $U_A$ from slot 3 even at a price greater than $p(2)$. For example, if we start lowering the price of 3, $A$ is going to request it before the price reaches $p(2)$ unless someone else takes it beforehand. But if someone does, say an assigned player 4, it must be the case that he did not envy slot 2 at $p(2)$. Repeating the same arguments as above, we conclude that $\theta_4>\theta_2$ and $p(4)>p(2)$, which means that $A$ would be also happy at slot 4. The same applies if no price lowering of slot 3 occurs and if the player in 4 is actually a member of $I$ who got slot 3 at the same moment; inductively and since we were forced to ``evict'' a player of $I$, we are bound to reach a case where one of the players getting slots desired by $A$ was not in $I$ or was unassigned.

Let us now assume that $B$ was unassigned. As before, we get $v_B>v_A$ based on the fact that $B$ weakly prefers 1 to 2. Also, since $B$ didn't want 2 before, $U_B\geq\theta_2 (v_B-p(2))>\theta_2 (v_A -p(2))$. But that means that there is another free slot available, intuitively the one ``reserved'' for $B$, where player $A$ could get utility at least $\theta_2 (v_A - p(2))$. Furthermore, he will be able to get it at an affordable price since $B$ originally had it. Combining all of the arguments above inductively we conclude that no player with positive $U_i$ can be left unassigned at the end of the process.

All the steps of the process end with at least one of the following outcomes. Set $I$ is strictly increased, the number of free slots strictly decreases or an envying player receives a slot. Since, players' $U$ can only increase through the process, the value points where players can be become envious of slots are finite and we are guaranteed termination of the process.
\qed\end{proof}

 Before proving Lemma~\ref{lem:efBASPA}, we need a technical lemma. 
 \begin{lemma}\label{lem:efconditions}
Consider an envy-free assignment with slot allocation $s(\cdot)$ and prices $p(\cdot)$. For any two players $i, j$ that are awarded a slot by the assignment, if $s(i)<s(j)$ and $p(j)>p(i)$ then $p(j)\theta_{s(i)}>B_j$, i.e., player $j$ cannot afford the slot of player $i$ with his current price.
\end{lemma}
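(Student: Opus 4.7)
The plan is to argue by contradiction. Assume, for the sake of contradiction, that $p(j)\theta_{s(i)}\leq B_j$, so that player $j$ \emph{can} afford slot $s(i)$ at his own current price $p(j)$. I want to derive a contradiction with the envy-free condition applied to player $j$ with respect to the slot of player $i$.

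First I would observe that since $p(j)>p(i)$, the hypothetical inequality gives
\[
\theta_{s(i)}p(i)<\theta_{s(i)}p(j)\leq B_j,
\]
so $j$ can also afford slot $s(i)$ at the lower price $p(i)$. Hence the envy-free condition in Definition~\ref{def:envyfree}, applied to player $j$ and slot $s(i)$, yields the nontrivial branch:
\[
u_j=\theta_{s(j)}(v_j-p(j))\;\geq\;\theta_{s(i)}(v_j-p(i)).
\]

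Next I would invoke the rationality consequence of envy-freeness (noted right after Definition~\ref{def:envyfree}): $p(j)\leq v_j$, so $v_j-p(j)\geq 0$, and because $p(i)<p(j)$ we also have $v_j-p(i)>v_j-p(j)\geq 0$. Both sides of the displayed inequality are therefore non-negative. Now I use the two ordering facts: $s(i)<s(j)$ gives $\theta_{s(i)}>\theta_{s(j)}$, and $v_j-p(i)>v_j-p(j)$, so
\[
\theta_{s(i)}(v_j-p(i))\;>\;\theta_{s(j)}(v_j-p(j)),
\]
contradicting the envy-free inequality above. This contradiction forces $p(j)\theta_{s(i)}>B_j$, as required.

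There is no real obstacle here; the only point to double-check is that rationality ($p(j)\leq v_j$) is indeed implied by the envy-free definition, so that the chain of strict inequalities on non-negative quantities is valid (if $v_j-p(j)$ were negative, the $\max\{\,\cdot\,,0\}$ in Definition~\ref{def:envyfree} would make the comparison vacuous on one side). Since this is exactly the observation made immediately after Definition~\ref{def:envyfree}, the argument goes through cleanly, and the whole proof fits in a few lines.
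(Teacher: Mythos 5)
Your proof is correct and follows essentially the same route as the paper: assume $p(j)\theta_{s(i)}\leq B_j$, note that $j$ then affords slot $s(i)$ even at price $p(i)$, invoke the envy-free inequality together with rationality ($p(j)\leq v_j$), and contradict it via $\theta_{s(i)}>\theta_{s(j)}$ and $v_j-p(i)>v_j-p(j)\geq 0$. The only difference is cosmetic — the paper reaches the same final inequality $\theta_{s(i)}(v_j-p(i))>\theta_{s(j)}(v_j-p(j))$ by a slightly more roundabout algebraic manipulation, while you multiply the two ordered nonnegative factors directly.
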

\begin{proof}
Suppose towards a contradiction, that for some $i,j$ with $1\leq s(i),s(j)\leq k$,  $s(i)<s(j)$ and $p(j)>p(i)$ with $p(j) \theta_{s(i)} \leq B_j$. This also means
\begin{equation} \label{budget2} p(i) \theta_{s(i)} \leq B_j.\end{equation}
First observe that since $p(j) >0$, $\theta_{s(i)} > \theta_{s(j)}>0$ and $p(i) < p(j)$, we have \begin{equation} \label{budget3} p(i)\theta_{s(i)} -p(j)\theta_{s(j)} < p(j)(\theta_{s(i)} -\theta_{s(j)}).\end{equation}
Also recall that rationality follows from the envy-free definition, thus $p(j) \leq v_j$. Therefore by equation~\eqref{budget3} above we get $p(i)\theta_{s(i)} -p(j)\theta_{s(j)} < v_j(\theta_{s(i)} -\theta_{s(j)})$ and therefore
$\theta_{s(i)}(v_j-p(i)) > \theta_{s(j)}(v_j-p(j))$, which in conjunction with equation \eqref{budget2} contradicts the envy-free constraints.
\qed\end{proof}

\begin{lemma}\label{lem:efBASPA}
Any envy-free assignment is realizable under \BCP.
\end{lemma}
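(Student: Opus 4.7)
The plan is to explicitly construct bids that realize any given envy-free assignment $(s(\cdot),p(\cdot))$ under \BCP. Let $\pi(1),\pi(2),\ldots,\pi(k)$ enumerate the $k$ assigned players in non-increasing order of envy-free price, breaking any ties to agree with \BCP's a priori tie-breaking ordering, and extend $\pi$ so that $\pi(k+1),\ldots,\pi(n)$ enumerate the unassigned players. I would then set value-bids so that the ``immediately lower value-bid'' that \BCP\ uses as the price per click is exactly the envy-free price: take $b_{\pi(1)}$ to be any number larger than $p(\pi(1))$, $b_{\pi(j)}=p(\pi(j-1))$ for $j=2,\ldots,k$, $b_{\pi(k+1)}=p(\pi(k))$, and let the remaining unassigned players receive arbitrary smaller distinct value-bids. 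Each assigned player's budget-bid is set truthfully, $g_{\pi(j)}=B_{\pi(j)}$, and each unassigned player's budget-bid is set to zero so that he cannot afford any slot.

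Under this bid profile the \BCP\ bid ordering is exactly $\pi(1),\ldots,\pi(n)$, and for every $j\le k$ the price that \BCP\ computes for $\pi(j)$ is $b_{\pi(j+1)}=p(\pi(j))$ by construction. I would then verify, by induction on $j$, that \BCP\ allocates slot $s(\pi(j))$ to $\pi(j)$. Under the inductive hypothesis, when \BCP\ processes $\pi(j)$ the unassigned slots are precisely $\{s(\pi(j)),s(\pi(j+1)),\ldots,s(\pi(k))\}$; the budget-compliance built into the envy-free assignment gives that $\pi(j)$ can afford $s(\pi(j))$ at price $p(\pi(j))$, so it only remains to rule out that \BCP\ routes $\pi(j)$ to a higher-CTR unassigned slot. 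Every such slot has the form $s(\pi(\ell))$ with $\ell>j$ and $s(\pi(\ell))<s(\pi(j))$, and by the chosen ordering $p(\pi(\ell))\le p(\pi(j))$; when the inequality is strict, Lemma~\ref{lem:efconditions} applied with $\pi(\ell)$ playing the role of $i$ and $\pi(j)$ the role of $j$ yields $p(\pi(j))\,\theta_{s(\pi(\ell))}>B_{\pi(j)}$, eliminating affordability.

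The only delicate case is $p(\pi(\ell))=p(\pi(j))$. Envy-freeness comparing $\pi(j)$'s slot with $\pi(\ell)$'s higher-CTR slot at the common price forces $v_{\pi(j)}=p(\pi(j))$ (and symmetrically for $\pi(\ell)$), so every tied player obtains utility zero from any slot he can afford at this price. I would argue that within a block of equal-price players every ordering consistent with \BCP's tie-breaking rule produces an envy-free assignment with the same per-player utilities as the original, so ordering $\pi$ within each tied block according to the mechanism's fixed tie-breaking rule yields either the prescribed allocation or one that differs only by a permutation within tied blocks. I expect this tie handling to be the only subtle step; in the strict case the induction is essentially a one-line application of Lemma~\ref{lem:efconditions}.
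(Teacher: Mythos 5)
Your construction is essentially the paper's: the same permutation $\pi$ ordered by non-increasing envy-free price, the same bids $b_{\pi(1)}>p(\pi(1))$ and $b_{\pi(i)}=p(\pi(i-1))$, truthful budget-bids for the winners, and the decisive step --- ruling out that \BCP\ promotes $\pi(j)$ to a higher-CTR slot whose prescribed occupant has a strictly lower price --- is exactly the paper's application of Lemma~\ref{lem:efconditions}. (The paper argues by contradiction via the first misplaced bidder rather than by forward induction, but these are the same argument.) When all envy-free prices are distinct, your proof is complete and correct.

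The equal-price case, which you rightly single out as the delicate one, is where your argument has a genuine gap. Envy-freeness at a common price $p$ forces $v=p$ only for the tied player sitting at the \emph{lower}-CTR slot who can afford the higher one (comparing his own slot to the higher-CTR slot at price $p$ gives $v\le p$, hence $v=p$ by rationality); your ``symmetrically for $\pi(\ell)$'' is false, since for the player at the higher-CTR slot the same comparison only yields $v\ge p$, i.e.\ rationality. Consequently a permutation within a tied block need \emph{not} preserve utilities: if the tie-breaking processes the low-slot player first, \BCP\ promotes him (he affords the higher slot and is indifferent), but the displaced high-slot player may have $v>p$, lands strictly lower, and now envies his own prescribed slot --- the realized outcome is neither the prescribed assignment nor envy-free. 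A concrete instance: $\theta_1=1$, $\theta_2=0.5$, one player with value $5$ and budget $10$ prescribed slot~1 and one with value $2$ and budget $9$ prescribed slot~2, both at price~$2$; if the a priori order favors the second player, he takes slot~1. Moreover, even where the permuted outcome were harmless, ``the prescribed allocation or one differing by a permutation within tied blocks'' is weaker than the lemma, whose notion of realizability requires matching the exact allocation and pricing. For what it is worth, the paper's own proof disposes of this case by asserting that the misplaced player ``envies'' the higher slot, which likewise implicitly assumes $v_i>p(i)$; so the tie case is treated loosely there as well, and a fully rigorous argument would have to show either that such promotions cannot occur under the chosen tie-breaking or handle tied prices separately.
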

\begin{proof}
Recall that we have assumed that there is an a priori strict ordering of the players used to break ties between equal bids in any auction.

We define budget-bids to be equal to the real budgets of the players. It suffices to define value-bids $b_i, i=1, \ldots n$ so that \BCP\ produces the same allocation of slot and prices to the players as the given envy-free assignment. For simplicity, we assume players not assigned to slots are assigned to virtual slots with zero \CTR, following the order of their bids.

Given $s(\cdot)$ and $p(\cdot)$ from the envy-free assignment, let
$$\pi: \{1,\ldots, n\}\mapsto \{1,\ldots, n\}$$ be a rearrangement of the players so that $p({\pi(1)}) \geq \cdots \geq p({\pi(n)}).$ In case of ties, $\pi$ respects the a priori ordering of the players. Let now $b_{\pi(1)}$ be a number strictly greater than $p({\pi(1)})$ and also $b_{\pi(i)}$ be $p({\pi(i-1)})$ for $i=2, \ldots, n$.
We claim now that with this bid vector, \BCP\ produces the same assignment as the envy-free assignment. Obviously \BCP\ examines the players in the order $\pi(1), \ldots, \pi(n)$ and assigns the prices
$p({\pi(1)}) \geq \cdots \geq p({\pi(n)}$)  to them.

Suppose towards a contradiction that the auction assigns a player $i$ to a slot $s(j) \neq s(i)$. Assume also that $i$ is the first (in the order of descending bids) player ``misplaced" by the auction. It is not possible to have that $s(i)<s(j)$ because by the auction's budget constraint, player $i$ affords slot $s(i)$ and so he would have been assigned by the auction there or higher, rather than at slot $s(j)$. Assume now that $s(j) <s(i)$. Since player $i$ is assigned to slot  $s(j)$ by \BCP, player $i$ affords slot  $s(j)$. Therefore, by Lemma~\ref{lem:efconditions},  $p(i) \leq p(j)$. We distinguish two cases:
\begin{itemize}
\item If $p(i) < p(j)$, then the rearrangement $\pi$ would place $i$ later  than $j$. Since obviously player $j$ is also misplaced by the auction (his slot is taken by $i$), we contradict the priority in the choice of $i$.

\item    If $p(i) = p(j)$, then player $i$ envies the higher slot $s(j)$, which he can afford since the auction placed him there, again a contradiction of the envy-free assignment.
\end{itemize}
\qed\end{proof}
%LATIN 2 removed for space
This completes  the proof of Theorem \ref{basic}.
%%%%%%%%%%%%%%%%%%%%%%%%%%%%%%%%%%%%%%%%%%%%%%%
\section{Budget-Conscious by Bid and Best Offer Second-Price Auctions.}
%%%%%%%%%%%%%%%%%%%%%%%%%%%%%%%%%%%%%%%%%%%%%%%

Recall that under \BCB\ and \BCBO, realizable envy-free assignments are a subset of Nash equilibria. We are going to show that for these two mechanisms, Nash equilibria exist, by establishing that envy-free assignments are realizable in both \BCB\ and \BCBO. Let us first focus on \BCB. 
\begin{theorem}\label{thm:BCSPA}
Under \BCB, there is always a Nash equilibrium that produces an envy-free assignment.
\end{theorem}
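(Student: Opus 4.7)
The plan is to leverage three ingredients: the existence of an envy-free assignment (Lemma~\ref{lem:existence}), the bid construction used in the proof of Lemma~\ref{lem:efBASPA}, and the inclusion, noted in Section~\ref{subsec:stable}, of realizable envy-free assignments in the set of Nash equilibria of \BCB. The overall strategy is to take an envy-free assignment $(s(\cdot),p(\cdot))$ guaranteed by Lemma~\ref{lem:existence}, exhibit a bid profile under which \BCB\ reproduces it, and then appeal to the above inclusion to promote the realization to a Nash equilibrium.

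Concretely, I would mirror the construction of Lemma~\ref{lem:efBASPA}: set every budget-bid $g_i$ to the true budget $B_i$, let $\pi$ be a decreasing ordering of the players by envy-free price (using the fixed a priori order to break ties), and set $b_{\pi(i)} := p(\pi(i-1))$ for $i \geq 2$, with $b_{\pi(1)}$ placed just above $p(\pi(1))$, or equal to it with tie-breaking in favour of $\pi(1)$ when $\pi(1)$'s budget constraint is tight. The verification that \BCB\ reproduces the envy-free assignment then proceeds by induction on the processing order: at step $i$, assuming the previous $i-1$ players are correctly placed, one argues that $\pi(i)$ can afford $s(\pi(i))$ at their own bid $b_{\pi(i)}$, and cannot afford any still-unassigned slot higher than $s(\pi(i))$ at that bid. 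Both points should follow from the envy-free affordability inequality $\theta_{s(\pi(i))} p(\pi(i)) \leq B_{\pi(i)}$ combined with Lemma~\ref{lem:efconditions}, by a contradiction argument analogous to the one in Lemma~\ref{lem:efBASPA}: any misassignment by \BCB\ would either violate Lemma~\ref{lem:efconditions} or contradict the envy-free conditions themselves.

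The main obstacle I anticipate is the self-affordability check. Unlike \BCP, which checks affordability at the assigned price $p(\pi(i))$, \BCB\ checks it at the player's own value-bid $b_{\pi(i)} = p(\pi(i-1)) \geq p(\pi(i))$, a potentially stricter constraint; the technical heart of the proof is to show that Lemma~\ref{lem:efconditions} still provides enough slack in the budget inequalities to meet this stronger requirement, possibly at the cost of adjusting $b_{\pi(1)}$ (and in tight boundary cases leaning on the a priori tie-breaking rule). Once realization is established, the observation in Section~\ref{subsec:stable}---that in \BCB\ no player can be evicted from their slot by another bidder's bid, since the self-affordability check depends only on the player's own bid and budget-bid---delivers the Nash equilibrium property for free, completing the proof.
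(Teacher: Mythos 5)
Your overall skeleton (existence of an envy-free assignment via Lemma~\ref{lem:existence}, realization under \BCB, then the inclusion of realizable envy-free assignments in the Nash equilibria of \BCB) matches the paper, and you have correctly located the crux in the self-affordability check. But the construction you propose does not survive that crux: setting every budget-bid to the true budget $B_i$ is exactly what fails. Under \BCB\ player $\pi(i)$ is admitted to slot $s(\pi(i))$ only if $\theta_{s(\pi(i))} b_{\pi(i)} \leq g_{\pi(i)}$, and with $b_{\pi(i)} = p(\pi(i-1)) \geq p(\pi(i))$ the envy-free guarantee $\theta_{s(\pi(i))} p(\pi(i)) \leq B_{\pi(i)}$ gives no control over $\theta_{s(\pi(i))} p(\pi(i-1))$. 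A player whose budget constraint is tight at his own envy-free price (e.g.\ $B_{\pi(i)} = \theta_{s(\pi(i))} p(\pi(i))$ with $p(\pi(i-1)) > p(\pi(i))$) is rejected from his intended slot and pushed down or out, so the realization breaks. Lemma~\ref{lem:efconditions} cannot repair this: it bounds a player's ability to afford \emph{other, higher} slots at his \emph{own} price, whereas the missing inequality concerns affording one's \emph{own} slot at the \emph{next-higher} player's price. Adjusting $b_{\pi(1)}$ or leaning on tie-breaking does not help, since the problem occurs at arbitrary positions $i$ in the ordering. Worse, the paper's final theorem shows that in some settings no Nash equilibrium of \BCB\ exists when all players budget-bid truthfully; since realizable envy-free assignments form a subset of Nash equilibria, no truthful-budget bid profile can realize an envy-free assignment in those settings, so your approach is not merely incomplete but provably unworkable in general.

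The paper's proof differs precisely here: the budget-bids are set strategically, namely $g_{\pi(i)} = \theta_{s(\pi(i))}\, b_{\pi(i)} = \theta_{s(\pi(i))}\, p(\pi(i-1))$ (and $g_{\pi(1)} = \theta_1 b_{\pi(1)}$). With this declaration the mechanism's affordability test at the player's own bid passes with equality for the intended slot and fails for every strictly higher slot (since $\theta_t b_{\pi(i)} > \theta_{s(\pi(i))} b_{\pi(i)} = g_{\pi(i)}$ for $\theta_t > \theta_{s(\pi(i))}$), so a short induction places everyone correctly; meanwhile the \emph{actual} payment is $\theta_{s(\pi(i))} p(\pi(i)) \leq B_{\pi(i)}$, so true budgets are respected in the outcome even though the declared budget-bids are not truthful. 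You should replace your choice of $g_i = B_i$ with this engineered budget-bid; the rest of your argument then goes through essentially as you describe.
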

\begin{proof}
First, an envy-free assignment always exists regardless of the mechanism, as stated earlier. Given such an assignment (allocation $s(\cdot)$ and pricing $p(\cdot)$) we can construct value-bids and budget-bids that produce the same slot allocation and pricing under \BCB\ and these will also form a Nash equilibrium. Similarly to the proof of Lemma~\ref{lem:efBASPA}, let us define a permutation of the players $\pi(\cdot)$ such that $p({\pi(1)}) \geq \cdots \geq p({\pi(n)})$ where the players not awarded slots are ordered in an arbitrary fashion.

For $i\in[2,\ldots,k+1]$, we set

\begin{eqnarray*}
b_{\pi(i)}&=& p(\pi(i-1))\\
g_{\pi(i)}&=& \theta_{s(\pi(i))} p(\pi(i-1))
\end{eqnarray*}
while $b_{\pi(1)}$ can be set appropriately high with $g_{\pi(1)}=\theta_1 b_{\pi(1)}$. The losing players' bids are set lower than the value-bid and budget-bid of player $\pi(k+1)$.

It is clear that the prices assigned by the mechanism are going to match those of the envy-free assignment. Let us now consider how the slots will be allocated. By the budget-bids definition, observe that any of the winning players can afford, according to his budget-bid, the slot he was awarded in the envy-free assignment but not any of the higher slots. But this means that when the mechanism considers him, he will be allocated that particular slot since by a simple inductive argument all previous players are assigned at their respective slots and the ``right'' slot will be available. We also take advantage of our assumption that the click-through rates are distinct.

We conclude that the players are going to be allocated and charged as in the envy-free assignment. As previously noted, e.g. recall the discussion in Subsection \ref{subsec:stable}, bids that produce an envy-free assignment form a Nash equilibrium as well for this mechanism.
\qed\end{proof}
\begin{theorem}\label{thm:BLSPA}
Under \BCBO, there is always a Nash equilibrium that produces an envy-free assignment.
\end{theorem}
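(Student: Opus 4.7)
\medskip
\noindent\emph{Proof plan.} The approach is to mirror the proof of Theorem~\ref{thm:BCSPA}: given an envy-free assignment (which exists by Lemma~\ref{lem:existence}), I shall construct a bid profile that \BCBO\ turns into exactly that allocation and pricing. Since, as noted in Subsection~\ref{subsec:stable}, every realizable envy-free assignment under \BCBO\ is in particular a Nash equilibrium, this suffices.

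Fix an envy-free pair $(s(\cdot),p(\cdot))$, write $\sigma(t)$ for the player at slot $t$, $p_t:=p(\sigma(t))$, and $T_t:=\theta_t p_t$. The first step is a structural lemma that the total payments decrease weakly with the slot index: $T_1\geq T_2\geq\cdots\geq T_k$. Indeed, for $t<t'$, either $\sigma(t')$ can afford slot $t$ at price $p_t$, in which case the envy-free inequality $\theta_{t'}(v_{\sigma(t')}-p_{t'})\geq \theta_t(v_{\sigma(t')}-p_t)$ rearranges to $T_{t'}-T_t\leq v_{\sigma(t')}(\theta_{t'}-\theta_t)\leq 0$; or she cannot afford it, in which case $T_t>B_{\sigma(t')}\geq T_{t'}$ directly. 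Either way $T_t\geq T_{t'}$.

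With this monotonicity in hand, I would construct the bids as follows. Pick a positive constant $\eta$ strictly smaller than every positive gap $T_{t-1}-T_t$. For each winner $\sigma(t)$, set a value-bid $b_{\sigma(t)}$ larger than every valuation in the instance, and budget-bids $g_{\sigma(1)}:=T_1+\eta$ and $g_{\sigma(t)}:=T_{t-1}$ for $t\geq 2$. Among the non-winning players designate one, $\ell$, taken to be the one of lowest priority in the a priori tie-breaking order; set $b_\ell:=p_k$ and $g_\ell:=T_k$, and let every remaining non-winner bid $(0,0)$. Because the value-bids of the winners are very large, the total offer $\min\{\theta_s b_{\sigma(t)},g_{\sigma(t)}\}$ at any slot $s$ collapses to $g_{\sigma(t)}$. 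So when \BCBO\ processes slot $t$ in decreasing-CTR order, the remaining players contribute total offers $T_{t-1}$ (or $T_1+\eta$ for $t=1$), $T_t,T_{t+1},\ldots,T_{k-1}$ from the surviving winners, $T_k$ from $\ell$, and $0$ from everyone else. Monotonicity combined with $\eta>0$ makes $\sigma(t)$'s offer the maximum and $T_t$ the second-maximum, so $\sigma(t)$ wins slot $t$ at per-click price $T_t/\theta_t=p_t$, as required. The designated loser $\ell$ never wins, since at every slot $s<k$ her offer $T_k$ is dominated by $\sigma(s)$'s offer $T_{s-1}\geq T_s\geq T_k$; at slot $k$ she merely supplies the second-highest offer that realises the price $p_k$.

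The main obstacle is tie handling when $T_{t-1}=T_t$ for some $t$, for then the offers of $\sigma(t)$ and $\sigma(t+1)$ at slot $t$ (or those of $\sigma(k)$ and $\ell$ at slot $k$) coincide and must be broken the right way. To handle this I would invoke the observation that whenever $T_t=T_{t+1}$, swapping the players at slots $t$ and $t+1$ preserves envy-freeness: the pertinent envy inequalities become equalities on both sides while affordability is invariant. Hence inside each maximal block of slots with a common $T$-value we may freely permute the occupants so that the a priori tie-breaking order agrees with the slot order within the block, ensuring every within-block tie at processing time is resolved in favour of the correct $\sigma(t)$; choosing $\ell$ to be the non-winner of lowest a priori priority then handles the remaining tie at slot $k$. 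True budgets are respected since $\sigma(t)$'s realised cost is $T_t\leq B_{\sigma(t)}$ by envy-free affordability, and the Nash-equilibrium conclusion then follows from the inclusion noted in Subsection~\ref{subsec:stable}.
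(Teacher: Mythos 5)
Your construction is essentially the paper's own: the budget-bids you assign, $g_{\sigma(t)}=T_{t-1}=\theta_{t-1}p_{t-1}$, are exactly those in the paper's proof, and pushing the value-bids high enough that $\min\{\theta_s b_{\sigma(t)},g_{\sigma(t)}\}=g_{\sigma(t)}$ at every relevant slot is just a relaxation of the paper's exact choice $b_{\sigma(t)}=T_{t-1}/\theta_t$ (the smallest value-bid with that property for the slots $s\leq t$ at which $\sigma(t)$ is still in contention). The paper verifies the allocation and prices by induction, deriving a contradiction with envy-freeness whenever a player is misplaced or overcharged; you instead front-load the envy-freeness into the monotonicity lemma $T_1\geq T_2\geq\cdots\geq T_k$ and then read the outcome of \BCBO\ off directly. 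That lemma is correct as you prove it, and this reorganization is arguably more transparent.

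Two local repairs are needed. First, ``larger than every valuation'' is not large enough to collapse the minimum to $g_{\sigma(t)}$: you need $b_{\sigma(t)}\geq g_{\sigma(t)}/\theta_t=p_{t-1}\theta_{t-1}/\theta_t$, and since $p_{t-1}$ can be as large as $v_{\sigma(t-1)}$ while $\theta_{t-1}/\theta_t$ is unbounded, this threshold can exceed every valuation; take the bids at least $\max_i B_i/\theta_k$ instead (or use the paper's exact values). Second, and more substantively, your tie-handling rests on a false claim. If $T_t=T_{t+1}$, the non-envy condition of the player at slot $t+1$ toward slot $t$ rearranges to $v(\theta_{t+1}-\theta_t)\geq T_{t+1}-T_t=0$, while affordability of slot $t$ follows from $\theta_t p_t=\theta_{t+1}p_{t+1}\leq B$; so for a positive valuation that player \emph{strictly} envies slot $t$ and the inequalities do not ``become equalities on both sides''. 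In particular, swapping the occupants of slots $t$ and $t+1$ does \emph{not} preserve envy-freeness: the player moved down would envy the slot he vacated. What saves you is that the very same computation shows $T_t=T_{t+1}$ is impossible in an envy-free assignment to begin with, so your monotonicity is in fact strict, $T_1>T_2>\cdots>T_k$, no ties ever arise among the winners' offers (nor between $\sigma(k)$ and $\ell$), and the entire tie-handling paragraph should simply be deleted rather than repaired. With these two adjustments the proof is correct.
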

\begin{proof}
Similarly to the previous theorem, we consider an envy-free assignment with allocation $s(\cdot)$ and pricing $p(\cdot)$. 

This time we define a permutation of the players $\pi(\cdot)$ according to the slot ordering in the envy-free assignment with the losing players ordered in an arbitrary fashion. Formally, $s(\pi(i))=i$ for $i=1\ldots k$.

For $i\in[2,\ldots,k+1]$, we set

\begin{eqnarray*}
b_{\pi(i)}&=&  p(\pi(i-1))  \theta_{s(\pi(i-1))} / \theta_{s(\pi(i))} = p(\pi(i-1)) \theta_{i-1}  / \theta_{i}\\
g_{\pi(i)}&=& \theta_{s(\pi(i-1))} b_{\pi(i)} = \theta_{i-1} p(\pi(i-1))
\end{eqnarray*}
and $b_{\pi(1)}$ is set to an appropriately high value, $g_{\pi(1)}=\theta_1 b_{\pi(1)}$, while the losing players' bids are set lower than the respective bids of player $\pi(k+1)$.

We will derive an inductive argument by assuming all slots from the first down to $i-1$ have been assigned and charged for as in the envy-free assignment. We will argue that player $\pi(i)$ will win slot $i$. Indeed, for purposes of contradiction, let us consider some other player $\pi(j), j>i$ that wins slot $i$. It is easy to see that the mechanism will consider $p(\pi(j-1)) \theta_{j-1}  / \theta_i$ as the amount per click offered by player $j$ while the amount offered by player $i$ is $p(\pi(i-1)) \theta_{i-1} /\theta_i$. We have 
\begin{eqnarray*}
\theta_{j-1} p(\pi(j-1)) \geq \theta_{i-1} p(\pi(i-1)).
\end{eqnarray*}
But that means that $p(\pi(j-1)) \theta_{j-1}/\theta_{i-1}$, the price $\pi(j-1)$ would offer for slot $i-1$, is higher than $p(\pi(i-1))$ a contradiction to our inductive assumption that slots up to $i-1$ have been charged as in the envy-free assignment.

It remains to show that player $\pi(i)$ will be charged $p(\pi(i))$. Similarly to above, player $i+1$ offers per click $p(\pi(i))$ while some other player $\pi(j),j>i+1$, would offer $p(\pi(j-1)) \theta_{j-1}  / \theta_i$. If this amount is higher, we would have

\begin{eqnarray*}
\theta_{j-1} p(\pi(j-1)) \geq \theta_{i} p(\pi(i)).
\end{eqnarray*}
But this would mean that player $j-1$ can afford, using his real budget, slot $i$ and he would be envious of that slot since he is paying more at slot $j-1$ contradicting our envy-free assignment constraints.

We conclude that using the above mentioned bids, an envy-free assignment is realizable under \BCBO\ and consequently these bids form a Nash equilibrium.
\qed\end{proof}

Note that in both theorems we have set the budget-bids of the players different from their true budgets. It turns out this is necessary, a perhaps surprising result as having budgets publicly known was necessary for truthful mechanisms in related work~\cite{Dobzinski08,ABHLT13}. 
\begin{theorem}
There are settings with public budgets, where a Nash equilibrium does not exist under both \BCB\ and \BCBO.
\end{theorem}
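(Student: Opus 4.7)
The plan is to exhibit two explicit counterexamples, one for \BCB\ and one for \BCBO, each consisting of a small instance (2 slots, 3 players suffice by analogy with Lemma~\ref{lem:BASPA_budgets_public}) together with a complete case analysis showing that for every value-bid profile some player has a profitable deviation. Since the budgets are \emph{public}, the budget-bid component $g_i=B_i$ is fixed, so the strategy space collapses to the single-dimensional value-bid $b_i$; this shrinks the enumeration compared with the private-budget setting but keeps the same flavor of argument.

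The design principle behind the instances is the same as the one used in Lemma~\ref{lem:BASPA_budgets_public}: take one player with a very large valuation and a moderate budget, together with two smaller players whose budgets are so tight that a large CTR gap $\theta_1 \gg \theta_2$ forces the price of the top slot to stay very low. With public budgets, this gap between ``what a player would bid'' and ``what his budget lets him pay'' is rigid, and any attempt to stabilize an outcome is undone by a neighbor who can either (i)~overbid slightly to grab a slot at still-affordable price, or (ii)~underbid to shove a competitor out of his slot on account of an unaffordable price. For \BCB\ the eviction is driven by the player's own bid against his public budget, so raising the bid of the player below him is the main destabilizing move; for \BCBO\ the key quantity is $\min\{b_i,B_i/\theta_s\}$, so deviations that alter which of the two terms is active in the min are the way to break every candidate equilibrium. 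I expect it to be possible to reuse the parameter setup of Figure~\ref{fig:EPA} (with minor modifications of the values or CTRs) for \BCB, while for \BCBO\ a slightly different parameter choice may be required because the mechanism allocates by ``total offer'' rather than by raw bid.

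Concretely, the steps I would carry out are: (1)~fix an instance $(\theta_1,\theta_2)$ and $(v_i,B_i)_{i=1,2,3}$ for \BCB; (2)~note that any profile in which some assigned player exceeds his public budget is immediately unstable (the mechanism simply does not allocate him), so we can restrict to profiles that are consistent with the stated budgets; (3)~enumerate all orderings of $b_1,b_2,b_3$ and, for each ordering, all allocations the mechanism could produce given the public budget cap at each slot; (4)~for each case, exhibit a profitable $\epsilon$-deviation following the same two-lever scheme (overbid to steal, underbid to evict) used in Lemma~\ref{lem:BASPA_budgets_public}; (5)~present the whole argument as a table analogous to Figure~\ref{fig:EPAcases}. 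The same plan is then repeated for \BCBO, replacing ``bid'' by ``offer'' $\min\{b_i,B_i/\theta_s\}$ throughout.

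The main obstacle is the case analysis itself rather than any conceptual difficulty: one must make sure that across \emph{every} bid ordering and \emph{every} possible slot assignment the mechanism can produce under public budgets, a suitable deviation is available, and in particular that no allocation arises in which the top bidder is so constrained by his budget that raising or lowering a competitor's bid does nothing. The cleanest way to avoid this trap is to choose $\theta_1/\theta_2$ large enough that the top slot's price is effectively capped well below the second-highest valuation, forcing a perpetual race-to-the-bottom between the two weaker players --- exactly the mechanism that made the \BCP\ example work. Once the parameters are tuned so that this capping phenomenon is in place for both mechanisms, the tabular case analysis can be carried out uniformly.
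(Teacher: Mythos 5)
There is a genuine gap, and it lies in the destabilizing move you plan to reuse. Your step (4) imports the ``two-lever scheme (overbid to steal, underbid to evict)'' from Lemma~\ref{lem:BASPA_budgets_public}, but the underbid-to-evict lever is specific to \BCP: there, lowering your bid raises a rival's \emph{price} (the next lower bid) past his budget cap and the mechanism then denies him the slot. Under \BCB\ a player is evicted from a slot only if he cannot afford it \emph{at his own bid}, and under \BCBO\ the allocation of slot $s$ depends on each unassigned player's own offer $\min\{b_i,g_i/\theta_s\}$; in neither mechanism can a competitor's bid change push you over your budget. (The paper makes exactly this point when explaining why realizable envy-free assignments are Nash equilibria for \BCB\ and \BCBO\ but not for \BCP.) So the engine that drives the case analysis in Figure~\ref{fig:EPA} simply does not run here, and a counterexample ``by analogy'' with minor parameter changes is not to be expected. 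Indeed, if one sets up the natural $3$-player, $2$-slot analogue with huge valuations and interleaved affordability thresholds $B_i/\theta_j$, one finds profiles of the form $b_2>b_3\geq b_1$ in which player $1$ sits in slot $1$ at price $0$, player $2$ is pinned in slot $2$ at price $b_3$ (he cannot reach slot $1$ because of his public budget and cannot undercut $b_3$ without losing slot $2$ to player $3$), and player $3$ can obtain nothing --- a Nash equilibrium. Your plan gives no instance and no argument that rules such profiles out.

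The actual source of instability with public budgets is different and needs more room: the paper uses \emph{four} players and \emph{three} slots, with all valuations uniformly enormous and the twelve thresholds $p_i^j=B_i/\theta_j$ interleaved so that $p_4^1<\dots<p_1^1<p_4^2<\dots<p_1^2<p_4^3<\dots<p_1^3$. A sequence of claims (no slot empty; player $i$ must occupy slot $i$; hence $b_i\in[p_{i+1}^i,p_i^i]$) forces the bids into disjoint intervals that \emph{increase} down the slots, so a player in a lower slot outbids, and is priced by, a player in a higher slot. The contradiction then comes from a price-shaving deviation --- player $3$ or player $2$ lowers his bid to tie a cheaper competitor, keeps his slot thanks to the budget caps and the tie-breaking rule, and strictly reduces his payment --- not from any eviction. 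The fourth player and the third slot are what make this shave both available and safe in every case ($b_4<b_2$ versus $b_4\geq b_2$). Your proposal would need to be rebuilt around this structural argument (or some other instability mechanism valid for \BCB/\BCBO) rather than around the \BCP-style eviction table.
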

\begin{proof}
We work below under \BCB. The same counterexample, with the same proof, works for \BCBO\ as well (always under the assumption that the true budgets are declared).

We assume that we have four players and three slots such that $\theta_1> \theta_2 > \theta_3$.

Let the (distinct and true) budgets of the four players be $B_1 > B_2 > B_3 > B_4$ (i.e. we assume that the players are ordered according to decreasing budget and also that they are named  after their rank in this  ordering ---no assumption as to which player gets which slot).

Let  $b_1, b_2, b_3$ and $b_4$ be the bids of the four players (no assumption with respect to the ordering of the bids).

Let $p_i^j$ be the highest per click price at which player $i$ can afford  slot $j$, i.e. $ p_i^j = B_i/\theta_j$.

Assume that the twelve values $p_i^j$ are pairwise distinct (this can be fixed by the choice of the parameters of the example). Obviously then $p_i^j < p_{i'}^j$ if $i > i'$ and $p_i^j < p_i^{j'}$ if $j < j'$.

Assume that the private valuations of all players are very high, higher than $p_1^3$,  and such that any player has a strictly positive incentive to move to a higher slot than where he presently is, given that (a) he can afford the higher slot and (b) he ends up paying a  price  $\leq p_1^3$ for it.

Assume that ties are  resolved in the order of the (public) budgets (players with higher budgets get higher priority in a tie).

Assume that $p_4^1 < p_3^1 < p_2 ^1 < p_1^1 <  p_4^2 < p_3^2 < p_2^2 < p_1^2 < p_4^3   < p_3^3     <  p_2^3 < p_1^3$, easily achievable by setting the budgets close enough.

We are going to give a series of claims regarding the purported NE. The claims will lead to a contradiction if we assume that a NE exists.

\begin{claim}
In a NE no slot is empty.
\end{claim}
\begin{proof} Indeed, if  slot $j$ is empty,  and player $i$  has not been assigned a slot, then player $i$ can unilaterally change his bid to $p_i^j$. Thus he will get slot $j$, strictly increasing his utility. \end{proof}
\begin{claim}
In a NE, player $i$ gets slot $i$ for $i= 1, 2, 3$.
\end{claim}
\begin{proof} Let us start by proving that player 1 gets slot 1. Otherwise, player 1 can alter his bid to be  $p_1^1$ and thus will get slot 1, since the other players either have a lower bid than $p_1^1$, or cannot afford slot 1.  Now to show that player 2 gets slot 2, observe that otherwise player 2 can change his bid to be $p_2^2$, and thus he will either (a)  get slot 1, and since he did not have slot 1 before the change,  he strictly increases his utility, or otherwise  (b)  get slot 2, because players 2,3 and 4 will either have a lower bid than $p_2^2$, or cannot afford slot 2, and thus again strictly increases his utility. Similarly, we show that  player 3 gets slot 3. 
By the previous claim we immediately get that in a NE $b_i \leq p_i^i$ for $i = 1,2,3$.\end{proof}
\begin{claim}
In a NE, for $i= 1,2,3$ $b_i \geq p_{i+1}^{i}$, and therefore by the previous claim, $b_i \in [p_{i+1}^{i}, p_{i}^i]$.
\end{claim}
\begin{proof} Suppose that $b_i < p_{i+1}^{i}$. Then player $i+1$ can change his bid to $ p_{i+1}^{i}$ and get slot $i$. \end{proof}

By the previous claims, we have that in a NE  $p_2^1 \leq b_1  \leq p_1^1 <  p_3^2 \leq b_2 \leq p_2^2  < p_4^3 \leq  b_3 \leq p_3^3$. Now we examine two cases:
\begin{itemize}
\item $b_4 < b_2$. Then if player $3$  lowers his bid to $b_2$, he will still get slot $3$  and at a strictly lower price than before, namely $<b_2$ rather than $b_2$  (take into account the rule for ties).
    \item $b_4 \geq b_2$. Since $b_4>p_2^4$, player 4 cannot get slot 2. Then if player $2$  lowers his bid to $b_1$, he will still get slot $2$  and at a strictly lower price than before, namely 0 rather than $b_1$ (take into account the rule for ties).
\end{itemize}
In both cases, a contradiction to the assumption of existence of a Nash equilibrium is reached.
\qed\end{proof}

As  envy-free assignments realizable under  \BCB\ or \BCBO\  form a subset of Nash equilibira of these two mechanisms, respectively,  the Theorem above implies that there are settings where  neither of these  mechanisms can realize any envy-free assignment (guaranteed to exist by Lemma \ref{lem:existence}), unless players are allowed to bid non-true budgets. 
\section{Discussion}
We have studied the problem of introducing budget constraints in GSP auctions. Since this cannot be done in a single straightforward way, we have proposed and explored different natural approaches. We have investigated the existence of Nash equilibria and envy-free assignments, and our results demonstrate that small changes in the way budget constraints are handled may affect the stability of these mechanisms significantly.

%LATIN 2 new paragraph
The consideration of several variants of  mechanisms, introduced not out of idle curiosity, but as  representations of  all the natural answers to natural questions raised by the introduction of budgets, and the examination of their properties and differences is what we consider as our  primary contribution in this work. 
%LATIN 2 end paragraph
We believe our work can serve as a starting point for studying further the properties of GSP auctions under budget constraints. Although we studied these auctions in the context of sponsored search, second-price auctions are widely used in many different settings both in off-line and on-line scenarios. As such, our results are applicable in a much wider context. In the area of sponsored search, a further step is required towards the more accurate modeling of the deployed systems. In practice, a player can transition between slots during a time period, as players are moderated according to their budget depletion rate. %LATIN change
Similarly to the majority of related work on keyword auctions with budgets, we chose to study the static setting first both as a stepping stone and in its own interest for settings outside of keyword auctions. An analysis on the effects of budget constraints in a dynamic setting that would extend upon our results, would contribute towards a more accurate modeling of sponsored search auctions.
Another interesting direction for future research is to evaluate the performance of these mechanisms in terms of the generated welfare.
This type of Price of Anarchy analysis for non-truthful auctions was initiated in \cite{CKS08}, and for certain settings with budget constraints, some results have been recently obtained in \cite{ST13}.

\section{Acknowledgements}
We would like to thank Konstantinos Gavriil for pointing out to us the counterexample of Figure \ref{fig:notEF}, that without distinct budgets, envy-free assignments may fail to exist. We also want to thank Giorgos Birbas for valuable discussions during the preparation of this work.

\end{document}